\begin{document}

\title{Fast Beam Training and Alignment for IRS-Assisted Millimeter Wave/Terahertz Systems}

\author{Peilan Wang, Jun Fang, Wei Zhang ~\IEEEmembership{Fellow,~IEEE} and Hongbin Li,
~\IEEEmembership{Fellow,~IEEE}
\thanks{Peilan Wang, and Jun Fang are with the National Key Laboratory
of Science and Technology on Communications, University of
Electronic Science and Technology of China, Chengdu 611731, China,
Email: peilan\_wangle@std.uestc.edu.cn, JunFang@uestc.edu.cn}
\thanks{Wei Zhang is with the School of Electrical Engineering and Telecommunications,
The University of New South Wales, Sydney, Australia, E-mail: wzhang@ee.unsw.edu.au
}
\thanks{Hongbin Li is
with the Department of Electrical and Computer Engineering,
Stevens Institute of Technology, Hoboken, NJ 07030, USA, E-mail:
Hongbin.Li@stevens.edu}
\thanks{This work was supported in part by the National Science
Foundation of China under Grant 61829103.}
\thanks{\footnotesize  \textcopyright 2021 IEEE. Personal use of this material is permitted.
  Permission from IEEE must be obtained for all other uses, in any current or future
  media, including reprinting/republishing this material for advertising or promotional
  purposes, creating new collective works, for resale or redistribution to servers or
  lists, or reuse of any copyrighted component of this work in other works.}}

\maketitle



\begin{abstract}
Intelligent reflecting surface (IRS) has emerged as a competitive
solution to address blockage issues in millimeter wave (mmWave)
and Terahertz (THz) communications due to its capability of
reshaping wireless transmission environments. Nevertheless,
obtaining the channel state information of IRS-assisted systems is
quite challenging because of the passive characteristics of the
IRS. In this paper, we develop an efficient downlink beam
training/alignment method for IRS-assisted mmWave/THz systems.
Specifically, by exploiting the inherent sparse structure of the
base station-IRS-user cascade channel, the beam training problem
is formulated as a joint sparse sensing and phaseless estimation
problem, which involves devising a sparse sensing matrix and
developing an efficient estimation algorithm to identify the best
beam alignment from compressive phaseless measurements.
Theoretical analysis reveals that the proposed method can identify
the best alignment with only a modest amount of training overhead.
Numerical results show that, for both line-of-sight (LOS) and NLOS
scenarios, the proposed method obtains a significant performance
improvement over existing state-of-the-art methods. Notably, it
can achieve performance close to that of the exhaustive beam
search scheme, while reducing the training overhead by 95\%.
\end{abstract}

\begin{keywords}
Intelligent reflecting surface, millimeter wave communications,
beam training/alignment.
\end{keywords}

\section{Introduction}
Intelligent reflecting surface (IRS) has emerged as a competitive
solution to address blockage issues and extend the coverage in
millimeter wave (mmWave) and Terahertz (THz) communications
\cite{ZhangBjornson20,TanSun18,WangFang19,StefanRenzo19}. To reap
the gain brought by the large number of passive elements,
instantaneous channel state information (CSI) is required for
joint active and passive beamforming for IRS-assisted systems
\cite{WuZhang19a,WangFang20a,YuXu19,NingChen20,ZhangZhang20,MishraJohansson19,JensenDe20,GuanWu21}.
Nevertheless, CSI acquisition is challenging for IRS-assisted
mmWave/THz systems due to the passive characteristics of the IRS
and the large size of the channel matrix. Recently, some studies
proposed to utilize the inherent sparse/low-rank structure of the
cascade base station(BS)-IRS-user mmWave channel, and cast channel
estimation into a compressed sensing framework
\cite{WangFang20,ChenLiang19,HeYuan19,LiuGao20}. The proposed
methods, however, suffer several drawbacks.
\textcolor{black}{Firstly, sparse/low-rank signal recovery via
optimization methods or other heuristic methods usually incurs a
high computational complexity which might be excessive for
practical systems.} Secondly, compressed sensing methods require
accurate phase information of the received measurements. While in
mmWave bands, the carrier frequency offset (CFO) effect and the
random phase noise are more significant than that in sub-6GHz
bands \cite{MyersHeath17,MyersMezghani18,MyersHeath19}. As a
result, the phase of the measurements might be corrupted and
unavailable for channel estimation. \textcolor{black}{In
\cite{VanNgo21}, an aggregated channel estimation approach was
proposed for IRS-assisted cell-free massive MIMO systems, where
the reflecting coefficients of the IRS are pre-configured and thus
only the equivalent channel between the access point and the user
(also referred to as aggregated channel) needs to be estimated.
The aggregated channel can then be estimated via traditional
channel estimation methods. Nevertheless, this approach needs to
pre-configure IRS's reflecting coefficients based on statistical
CSI, thus may suffer a beamforming gain loss as compared with
those beamforming approaches that are based on instantaneous CSI.}




To address the above difficulties, instead of obtaining the full
CSI, we focus on the problem of beam training whose objective is
to acquire the angle of departure (AoD) and the angle of arrival
(AoA) associated with the dominant path between the BS and the
user. Beam training/alignment is an important topic that has been
extensively investigated in conventional mmWave systems, e.g.
\cite{AlkhateebEl14,XiaoHe16,NohZoltowski17,HassaniehAbari28,MyersMezghani18,WuCheng19,LiFang19}.
Nevertheless, beam alignment for IRS-assisted systems is more
challenging as we need to simultaneously align the BS-IRS link as
well as the IRS-user link. So far there are only a few attempts
made on beam alignment for IRS-assisted mmWave systems, e.g.
\cite{NingChen19,YouZheng20,WangZhang20,HuZhong20}. Specifically,
\cite{NingChen19} proposed a hierarchical beam search scheme for
IRS-assisted THz systems. The proposed scheme, however, requires
the BS to interact with each user individually, which may not be
feasible at the initial channel acquisition stage. In
\cite{YouZheng20}, a multi-beam sweeping method based on
grouping-and-extracting was proposed for beam training for
IRS-assisted mmWave systems. This work assumes that the BS has
aligned its beam to the LOS component between the BS and the IRS,
and then focuses on the beam training between the IRS and the
user. Nevertheless, in practice, the location information of the
IRS may not be available to the BS, in which case one needs to
perform a joint BS-IRS-user beam training. Recently,
\cite{WangZhang20} proposed a random beamforming-based maximum
likelihood (ML) estimation method to estimate the parameters
associated with the LOS component, by treating other NLOS
components as interference. In \cite{HuZhong20}, an uplink beam
training scheme was proposed for IRS-based multi-antenna multicast
systems, where the IRS is used as a component of the transmitter.





In this paper, we propose an efficient beam training scheme for
IRS-assisted mmWave/THz downlink systems. The key idea of our
proposed method is to let both the transmitter and the IRS form
multiple narrow beams to scan the angular space. Specifically, the
proposed beam training process consists of a few rounds of
``full-coverage scanning'', where in each full-coverage scanning,
the entire space is efficiently scanned using pre-designed
multi-directional beam training sequences. Also, in different
rounds of full-coverage scanning, we use different combinations of
directions to scan the angular space. Such a diversity allows us
to identify the best beam alignment via an efficient
set-intersection-based scheme. Theoretical analysis suggests that
the proposed method can identify the best alignment with only a
modest amount of training overhead.




It should be noted that although both the current work and
\cite{WangFang21} employ multi-directional beam sequences for
downlink training, there are some major distinctions between these
two works. \textcolor{black}{Specifically, the work
\cite{WangFang21} considered downlink training in OFDM systems,
where a same directional beam can be scaled by different factors
at different subcarriers. This feature was utilized such that
different directional beams can be distinguished from each other,
and thus the CSI can be conveniently extracted. Nevertheless, such
a scheme fails when single-carrier systems are considered because
the modulation vector degenerates into a scalar which is no longer
an effective fingerprint to distinguish different directional
beams. In contrast, our proposed method does not rely on any
modulation vector to identify the correct beam and works for
single-carrier systems.}



The rest of the paper is organized as follows. In Section
\ref{sec:system}, the system model and the problem formulation are
discussed. In Section \ref{sec:sensing}, we study how to devise
the active and passive beam training sequences. In Section
\ref{sec:ba-LOS} and Section \ref{sec:ba-NLOS}, we develop
efficient set-intersection-based methods for LOS and NLOS
scenarios to identify the best alignment based on compressive
phaseless measurements. Simulation results are provided in Section
\ref{sec:simulation}, followed by concluding remarks in Section
\ref{sec:conclusion}.


\begin{figure}[t]
\centering
\includegraphics[width=3.5in] {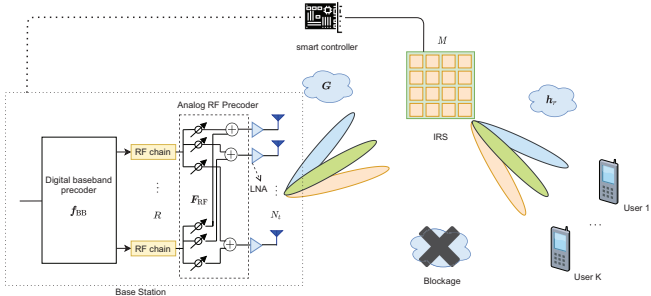}
\caption{IRS-assisted mmWave downlink multi-user systems.}
\label{Fig_system}
\end{figure}



\section{System Model and Problem Formulation} \label{sec:system}
\subsection{Downlink Training and System Model}
We consider the problem of downlink training and beam alignment in
an IRS-assisted mmWave/THz multi-user system, where an IRS is
deployed to assist the data transmission from the BS to a number
of single-antenna users (see Fig. \ref{Fig_system}).


Before proceeding, we first provide a discussion of our proposed
downlink training protocol. In the downlink beam training stage,
the BS periodically broadcasts a pre-designed common beam training
sequence $\{\boldsymbol{f}(t)\}_{t=1}^T$, and at the same time the
IRS uses a common reflection beam sequence
$\{\boldsymbol{v}(t)\}_{t=1}^T$ to reflect the signal coming from
the BS. For simplicity, we assume that the direct link from the BS
to each user is blocked due to unfavorable conditions, and the
transmitted signal arrives at each user via the reflected
BS-IRS-user channel\footnote{When the direct link between the BS
and the user is available, we can first switch off the IRS and
perform downlink training of the direct link using conventional
mmWave beam training schemes. The effect of the direct link can
then be canceled when we perform downlink BS-IRS-user training.}.
Each user receives signals reflected from the IRS, and estimates
the angular parameters associated with the dominant path of its
own downlink channel. This information is then sent by this user
to the BS via a dedicated channel. A connection can thus be
established between the BS and each user after the BS receives the
related channel information associated with this user. The
schematic of the downlink training protocol is also illustrated in
Fig. \ref{fig_protocol}.

\textcolor{black}{In this downlink training framework, we are interested in studying
\emph{how to jointly devise the active/passive beam training
sequences and estimate the channel parameters} to achieve fast
beam alignment between the BS and each user. Note that since
channel estimation is performed at each user and the design of
beam training sequences is independent of users, our study can be
simplified to single-user scenarios. Therefore in the rest of the
paper, we consider the scenario where there is only a single user
in the system. Also, in our setup, we assume that the BS has no
knowledge of the geographical location of all facilities,
including the IRS, the user and the BS itself.}


To more rigorously formulate our problem, we now proceed to
discuss our system model. The BS is equipped with $N_t$ antennas
and $R \ll N_t$ radio frequency (RF) chains. At the BS, a digital
baseband precoder $\boldsymbol{f}_{\text{BB}} \in \mathbb{C}^{R}$
is first applied to a broadcast signal ${s}$, then followed by an
analog RF beamformer $\boldsymbol{F}_{\text{RF}} \in
\mathbb{C}^{N_{t} \times R}$. The transmitted training signal at
the $t$th time instant can be written as
\begin{align}
\boldsymbol{x}(t) = \boldsymbol{F}_{\text{RF}}(t)
\boldsymbol{f}_{\text{BB}}(t)s(t)=\boldsymbol{f}(t)s(t),
\end{align}
where $\boldsymbol{f}(t)\triangleq \boldsymbol{F}_{\text{RF}}(t)
\boldsymbol{f}_{\text{BB}}(t)$ is the transmitter's beamforming
vector, and we set $s(t)=1$ in the beam training stage. Let
$\boldsymbol{G} \in \mathbb C^{M \times N_t}$ denote the channel
from the BS to the IRS, and $\boldsymbol{h}_{r} \in \mathbb C^{M}$
denote the channel from the IRS to the user. The IRS is a planar
array consisting of $M = M_y \times M_z$ passive reflecting
elements. Each reflecting element of the IRS can reflect the
incident signal with a reconfigurable phase shift and amplitude
via a smart controller. \textcolor{black}{Let $\phi_m (t)\in [0,2\pi]$ and $\zeta_m
(t)\in [0,1]$ denote the phase shift and amplitude coefficients
adopted by the $m$th element at the $t$th time instant.} Define
\begin{align}
    \boldsymbol{\Phi} (t)\triangleq {\rm diag}(\zeta_1 (t)e^{j\phi_1(t)},
    \cdots, \zeta_M(t) e^{j \phi_M(t)}) \in \mathbb C^{M \times M}.
\end{align}
as the IRS reflecting matrix. The training signal received by the
user can thus be expressed as
\begin{align}
z(t) =& \boldsymbol{h}_r^H \boldsymbol{\Phi} (t)\boldsymbol{G} \boldsymbol{x}(t) + n(t) \nonumber \\
=& \boldsymbol{v}^H(t) {\rm diag}(\boldsymbol{h}_r^H)
\boldsymbol{G} \boldsymbol{f}(t) + n(t) \nonumber \\
= & \boldsymbol{v}^H(t) \boldsymbol{H} \boldsymbol{f}(t) + n(t)
\label{received-signal-model}
\end{align}
where $\boldsymbol{v}(t)\triangleq [\zeta_1
(t)e^{j\phi_1(t)}\phantom{0} \cdots\phantom{0} \zeta_M(t) e^{j
\phi_M(t)}]^H$ is the passive reflecting vector, $\boldsymbol{H}
\triangleq {\rm
diag}(\boldsymbol{h}_{r}^H)\boldsymbol{G}\in\mathbb{C}^{M\times
N_t}$ is referred to as the cascade channel, and $n(t) \sim
\mathcal{CN}(0,\sigma^2)$ represents the additive white Gaussian
noise.


\begin{figure}[!t]
    \centering
    {\includegraphics[width=3.5in]{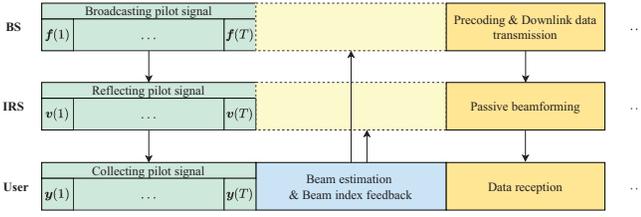}}
    \caption{Frame structure for the proposed joint BS-IRS-user beam training protocol.} \label{fig_protocol}
\end{figure}


\subsection{Channel Model}
It is well-known that the narrowband mmWave channel can be
characterized by a widely-used Saleh-Valenzuela (S-V) geometric
model \cite{AyachRajagopal14,JiangZhang19}. \textcolor{black}{As
for THz channels, some initial channel measurements at THz
frequencies \cite{HanBicen15,LinLi15,HanJornet18,TatariaShafi21}
reported that THz channels also exhibit sparse scattering
characteristics, an effect that is observable at mmWave
frequencies. The main difference between the mmWave and THz
channels lies in the path loss. Specifically, THz channels suffer
more severe free spreading loss due to the extremely high
frequency. In addition, the high attenuation caused by molecular
absorption is no longer negligible and has to be taken into
account at THz frequencies. On the other hand, due to the severe
path loss of THz frequencies, THz channels may exhibit a higher
degree of sparsity than mmWave communications. As a result, THz
channels can also be characterized by the S-V model.}
Specifically, the BS-IRS channel can be modeled as
\begin{align}
\boldsymbol{G} = &\sqrt{\frac{N_tM}{P }}
\big(\varrho_0\boldsymbol{a}_r(\vartheta _0^r,\gamma _0^r)
\boldsymbol{a}_{t}^H(\phi _0^t) \nonumber \\&+   \sum _{p=1}^{P-1}\varrho
_p\boldsymbol{a}_r(\vartheta _p^r,\gamma _p^r)
\boldsymbol{a}_{t}^H(\phi _p^t) \big) \label{ch-G}
\end{align}
where $P$ is the total number of paths between the BS and the IRS,
$\varrho_0$ denotes the complex gain of the LOS path,
\textcolor{black}{$\varrho_p,\forall p = 1, \ldots,P-1$
represents the complex gain of the $p$th NLOS path,
$\vartheta_p^r$ ($\gamma_p^r$) for $p=0,1, \ldots, P-1$ denotes the associated azimuth
(elevation) AoA, $\phi_p^t$ for $p=0, \ldots,P-1$ is the associated AoD, and
$\boldsymbol{a}_r$ ($\boldsymbol{a}_{t}$) denotes the normalized
receive (transmit) array response vector.} For simplicity, we
define
\begin{align}
\boldsymbol{a} (\phi,N) \triangleq \frac{1}{\sqrt{N}}[1\phantom{0}
e^{j\pi\phi}\phantom{0}\ldots\phantom{0} e^{j \pi(N-1) \phi}]^T.
\end{align}
Suppose the BS employs a uniform linear array (ULA). The transmit
array response vector can be expressed as
\begin{align}
\boldsymbol{a}_t(\phi) = \boldsymbol{a}\bigg(\frac{2d}{\lambda}
\sin \phi, N_t\bigg).
\end{align}
Also, since the IRS is an $M_y \times M_z$ uniform planar array,
the receive array response vector can be written as
\begin{align}
\boldsymbol{a}_r (\vartheta,\gamma) =
\boldsymbol{a}\bigg(\frac{2d}{\lambda} \sin \vartheta\sin \gamma,
M_y\bigg) \otimes \boldsymbol{a}\bigg(\frac{2d}{\lambda} \cos
\gamma, M_z\bigg).
\end{align}
where $\otimes$ denotes the Kronecker-product, $d$ is the antenna
spacing and $\lambda$ denotes the wavelength of the signal.

Owing to the sparse scattering nature of mmWave channels, the
BS-IRS channel has a sparse representation in the angular
(beam-space) domain:
\begin{align}
\boldsymbol{G} =& \sqrt{\frac{N_tM}{P}}(\boldsymbol{D}_{M_y}
\otimes \boldsymbol{D}_{M_z}) \boldsymbol{\Sigma}
\boldsymbol{D}_{N_t}^H = \sqrt{\frac{N_tM}{P}} \boldsymbol{D}_{\rm
R} \boldsymbol{\Sigma} \boldsymbol{D}_{N_t}^H \label{eqn3}
\end{align}
where $\boldsymbol{D}_{\rm R}\triangleq \boldsymbol{D}_{M_y}
\otimes \boldsymbol{D}_{M_z} \in \mathbb C^{M \times M}$,
$\boldsymbol{D}_{M_y}$, $\boldsymbol{D}_{M_z}$, and
$\boldsymbol{D}_{N_t}$ are defined as
\begin{align}
\boldsymbol{D}_{N} \triangleq [ \boldsymbol{a}(\eta(1),N), \ldots,
\boldsymbol{a}(\eta(N),N)] \in \mathbb C^{N \times N}, \label{D-N}
\end{align}
with $N=M_y,M_z,N_t$ respectively, $\eta(i) = -1+ \frac{2i-1}{N},
i = 1,\ldots,N$, and $\boldsymbol{\Sigma}\in \mathbb C^{M\times
N_t}$ is a sparse matrix with $P$ nonzero entries. Here we suppose
that the true AoA and AoD lie on the discretized grid. In the
presence of grid mismatch, the number of nonzero elements in the
sparse matrix will increase as a result of power leakage.

Similarly, the IRS-user channel can be modeled as
\begin{align}
\boldsymbol{h}_r =\sqrt{\frac{M}{P' }} \bigg( \alpha _0
\boldsymbol{a}_r(\vartheta _0^t,\gamma _0^t)+\sum
_{p=1}^{P^{\prime }-1} \alpha _p \boldsymbol{a}_r(\vartheta
_p^t,\gamma _p^t)  \bigg), \label{hr}
\end{align}
where $P'$ is the number of signal paths between the IRS and the
user, $\alpha_0$ denotes the complex gain of the LOS path,
$\alpha_p,\forall p=1,\ldots,P'-1$ denotes the complex gain
associated with the $p$th NLOS path, and $\vartheta_p^t$
($\gamma_p^t$) for $p=0,1,\ldots,P'-1$ denotes the associated
azimuth (elevation) AoD. Due to sparse scattering characteristics,
the IRS-user channel can be written as
\begin{align}
\boldsymbol{h}_{r} = \sqrt{\frac{M}{P'}}\boldsymbol{ D}_{\rm R}
\boldsymbol{\alpha}, \label{eqn4}
\end{align}
where $\boldsymbol{\alpha} \in \mathbb C^{M \times 1}$ is a sparse
vector with $P^{\prime}$ nonzero entries. It is easy to verify
that $\boldsymbol{D}_{\rm R}^H\boldsymbol{ D}_{\rm R} =
\boldsymbol{I}_{M}$ and $\boldsymbol{D}_{N_t}^H
\boldsymbol{D}_{N_t} = \boldsymbol{I}_{N_t}$.


Based on (\ref{eqn3}) and (\ref{eqn4}), it was shown in
\cite{WangFang20} that the cascade channel admits a sparse
representation as follows
\begin{align}
\boldsymbol{H} =& \sqrt{\frac{N_tM}{PP'}} \boldsymbol{\bar D}_R
\boldsymbol{\tilde \Lambda} \boldsymbol{D}_{N_t}^H \nonumber \\
=&\boldsymbol{\bar D}_R \boldsymbol{\Lambda}
\boldsymbol{D}_{N_t}^H, \label{eqn1}
\end{align}
where $\boldsymbol{\bar D}_R \triangleq \boldsymbol{\tilde
D}_R(:,1:M)$ is a submatrix of $\boldsymbol{\tilde D}_R$
constructed by its first $M$ columns, and $\boldsymbol{\tilde D}_R
\triangleq \sqrt{M} \boldsymbol{D}_{\rm R}^{\ast} \bullet
\boldsymbol{D}_{\rm R}$, with $\bullet$ denoting the transposed
Khatri-Rao product. It can be readily verified that
$\boldsymbol{\tilde D}_R\in\mathbb{C}^{M\times M^2}$ has $M$
distinct columns which are exactly its first $M$ columns. Also, we
can verify that $\boldsymbol{\bar D}_R$ is an unitary matrix, i.e.
$\boldsymbol{\bar D}_R^{H}\boldsymbol{\bar
D}_R=\boldsymbol{I}_{M}$, and its column takes the form of
$\boldsymbol{\bar a}_r( \varphi,\varpi) \triangleq
\boldsymbol{a}(\varphi, M_y) \otimes \boldsymbol{a}(\varpi,M_z)$.

Also, we have
\begin{align}
\boldsymbol{\Lambda} \triangleq \sqrt{\frac{N_tM}{PP'}}
\boldsymbol{\tilde \Lambda}\in \mathbb C^{ M \times N_t}
\end{align}
in which $\boldsymbol{\tilde \Lambda}$ is a merged version of
$\boldsymbol{J}\triangleq \boldsymbol{\alpha}^{\ast} \otimes
\boldsymbol{\Sigma}$, with each of its rows being a superposition
of a subset of rows in $\boldsymbol{J}$, i.e.
\begin{align}
\boldsymbol{\tilde \Lambda}(i,:) = \sum_{n \in \mathcal{S}_{i}}
\boldsymbol{J}(n,:)
\end{align}
where $\boldsymbol{\tilde \Lambda}(i,:)$ denotes the $i$th row of
$\boldsymbol{\tilde \Lambda}$, $\mathcal{S}_{i}$ denotes the set
of indices associated with those columns in $\boldsymbol{\tilde
D}_R$ that are identical to the $i$th column of $\boldsymbol{\bar
D}_R$. It is clear that $\boldsymbol{\Lambda}$ is a sparse matrix
with at most $P\times P'$ nonzero elements.

\subsection{Problem Formulation}
Combining (\ref{received-signal-model}) and (\ref{eqn1}), the
received pilot signal at the user can be expressed as
\begin{align}
z(t)=&\boldsymbol{v}^H(t) \boldsymbol{H} \boldsymbol{f}(t) +
n(t) \nonumber\\
=&\boldsymbol{v}^H(t)\boldsymbol{\bar D}_R\boldsymbol{\Lambda}
\boldsymbol{D}_{N_t}^H\boldsymbol{f}(t) + n(t) \label{eqn2}
\end{align}
Note that (\ref{eqn2}) is an ideal signal model without
considering the CFO and random phase noise. In mmWave/THz bands,
the CFO, i.e., the mismatch in the carrier frequencies at the
transmitter and the receiver, is more significant than sub-6GHz
bands and cannot be neglected. For instance, a small offset of 10
parts per million (ppm) at mmWave frequencies can cause a large
phase misalignment in less than a hundred nanoseconds. Besides the
CFO, mmWave communication systems also suffer random phase noise
due to the jitter of the oscillators. The phase noise, together
with the CFO, leads to an unknown phase shift to measurements
$z(t)$ that varies across time. In this case, only the magnitude
of the measurement $z(t)$ is reliable. Define
\begin{align}
y(t)\triangleq |z(t)|= |\boldsymbol{v}^H(t)\boldsymbol{\bar
D}_R\boldsymbol{\Lambda} \boldsymbol{D}_{N_t}^H\boldsymbol{f}(t) +
n(t)| \label{phaseless-signal}
\end{align}

Based on $\{y(t)\}_{t=1}^T$, our objective is to acquire the
information needed to achieve beam alignment between the BS and
the user. Note that identifying the best beam alignment is
equivalent to acquiring the location index of the largest (in
magnitude) element in the sparse matrix $\boldsymbol{\Lambda}$.
This is because $\boldsymbol{\Lambda}$ is a beam-space
representation of the cascade channel $\boldsymbol{H}$. Hence the
largest element in $\boldsymbol{\Lambda}$ actually corresponds to
the strongest path of the BS-IRS-user channel.


To identify the largest element in $\boldsymbol{\Lambda}$, a
natural approach is to exhaustively search all possible beam
pairs. Specifically, at each time instant $t$, choose
$\boldsymbol{f}(t)\triangleq\boldsymbol{F}_{\text{RF}}(t)
\boldsymbol{f}_{\text{BB}}(t)$ as a certain column from
$\boldsymbol{D}_{N_t}$, and choose $\boldsymbol{v}(t)$ as a
certain column from $\boldsymbol{\bar D}_R$, i.e.
\begin{align}
\boldsymbol{v}(t) = & \sqrt{M} \boldsymbol{\bar D}_{R} (:,i) , \\
\boldsymbol{f}(t) = & \boldsymbol{D}_{N_t} (:,j),
\end{align}
Here the scaler $\sqrt{M}$ in $\boldsymbol{v}(t)$ is used to
ensure that entries of $\boldsymbol{v}(t)$ are of constant
modulus. Then the received measurement $y(t)$ is given by
\begin{align}
y(t) = | \sqrt{M} \lambda_{i,j} +n(t)|
\end{align}
where $\lambda_{i,j}$ denotes the $(i,j)$th entry of
$\boldsymbol{\Lambda}$. After an exhaustive search, we can
identify the largest (in magnitude) entry in
$\boldsymbol{\Lambda}$. This exhaustive search scheme, however,
has a sample complexity of $M N_t$, which is prohibitively high
since both $M$ and $N_t$ are large for mmWave and THz systems in
order to combat severe path loss. In the following sections, we
develop a more efficient method to perform joint BS-IRS-user beam
training.

Specifically, since the period of time for beam training is
proportional to the number of measurements $T$, the problem of
interest is how to devise the active/passive beam training
sequences $\{\boldsymbol{f}(t),\boldsymbol{v}(t)\}_{t=1}^T$ and
develop a computationally efficient estimation scheme such that we
can identify the best beam alignment using as few measurements as
possible.

\emph{Remark:} Different from our work that neglects the phase
information of the received measurements, we noticed that some
other works, e.g. \cite{MyersMezghani18,MyersHeath17}, model the
CFO as an unknown parameter and perform joint CFO and channel
estimation.








\section{Beam Training Sequence Design} \label{sec:sensing}
To more efficiently probe the channel, we propose to let the BS
and the IRS form multiple pencil beams simultaneously and steer
them towards different directions. Specifically, the precoding
vector $\boldsymbol{f}(t)$ is chosen to be
\begin{align}
\boldsymbol{f}(t) = & \boldsymbol{F}_{\rm RF}(t)
\boldsymbol{f}_{\rm BB}(t)
= \boldsymbol{D}_{N_t} \boldsymbol{S}(t) \boldsymbol{f}_{\rm BB}(t) =\boldsymbol{D}_{N_t}
\boldsymbol{a}(t), \label{hyb-f}
\end{align}
where $\boldsymbol{S}(t) \in \{0,1\}^{N_t \times R}$ is a column
selection matrix which has only one nonzero element in each column
and $\boldsymbol{a}(t) \triangleq \boldsymbol{S}(t)
\boldsymbol{f}_{\rm BB}(t)$ is a sparse vector with at most $R$
nonzero entries. Note that each column of $\boldsymbol{D}_{N_t}$
can be considered as a beamforming vector steering a beam to a
certain direction. Hence, the hybrid precoding vector in
\eqref{hyb-f} can form at most $R$ beams towards different
directions simultaneously. The passive reflecting vector
$\boldsymbol{v}(t)$ can be generated in a similar way. We let
\begin{align}
\boldsymbol{v}(t) = \boldsymbol{\bar D}_R \boldsymbol{c}(t),
\label{pass-v}
\end{align}
where $\boldsymbol{c}(t)$ is a sparse vector containing at most
$Q$ nonzero elements. Here $Q$ is a parameter of user's choice. We
will discuss its choice later in this paper.


Substituting \eqref{hyb-f}-\eqref{pass-v} into
\eqref{phaseless-signal}, we obtain
\begin{align}
{y} (t) =  | \boldsymbol{c}^H(t)
\boldsymbol{\Lambda}\boldsymbol{a}(t) + n(t)| \label{y-ca}
\end{align}

\subsection{Sensing Matrix Design}
In this subsection, we discuss how to devise a set of sparse
sensing vectors $\{\boldsymbol{c}(t),\boldsymbol{a}(t)\}_{t=1}^T$
to efficiently probe the channel. Let $S_{c}(t) \triangleq \{ i|
c_{i}(t) >0\}$ denote the set of indices associated with the
nonzero elements in $\boldsymbol{c}(t)$, and $S_{a}(t) \triangleq
\{ j| a_{j}(t)
>0\}$ denote the set of indices of the nonzero elements in
$\boldsymbol{a}(t)$. Also, for simplicity, we assume that the
nonzero entries in $\{\boldsymbol{c}(t)\}$ are all set to
$\beta>0$, and the nonzero entries in $\{\boldsymbol{a}(t)\}$ are
all set to $\gamma>0$. Therefore, we have
\begin{align}
{y} (t) =&\big|\beta\gamma\sum_{i\in S_c(t),j\in
S_a(t)}\lambda_{i,j}+n(t)\big| \label{eqn9}
\end{align}


To find the strongest signal path, we need to make sure that each
element in $\boldsymbol{\Lambda}$ will be scanned at least once.
We first introduce the concept of ``a round of full-coverage
scanning'' as a basic building block for our beam training
process. Define $U\triangleq M/Q$ and $V\triangleq N_t/R$ and
assume both of them are integers. Each round of full-coverage
scanning consists of $T_0=U\times V$ measurements, and these $T_0$
measurements are generated according to:
\begin{align}
\boldsymbol{Y}=|\boldsymbol{C}^H\boldsymbol{\Lambda}\boldsymbol{A}+\boldsymbol{N}|
\end{align}
where $\boldsymbol{C}\in\mathbb{R}^{M\times U}$,
$\boldsymbol{A}\in\mathbb{R}^{N_t\times V}$, and
$\boldsymbol{Y}\in\mathbb{R}^{U\times V}$ is a matrix constructed
by $\{y(t)\}_{t=1}^{T_0}$. Specifically, the $(u,v)$th entry of
$\boldsymbol{Y}$ is equal to $\boldsymbol{Y}(u,v)=y((u-1)V+v)$,
and $(u,v)$th entry of $\boldsymbol{N}$ is equal to
$\boldsymbol{N}(u,v)=n((u-1)V+v)$. Therefore, once
$\boldsymbol{C}$ and $\boldsymbol{A}$ are specified, the set of
sparse sensing vectors
$\{\boldsymbol{c}(t),\boldsymbol{a}(t)\}_{t=1}^{T_0}$ can be
accordingly determined. Let
\begin{align}
\boldsymbol{C}=&[\boldsymbol{c}_1\phantom{0}\boldsymbol{c}_2\phantom{0}\ldots\phantom{0}
\boldsymbol{c}_U] \nonumber\\
\boldsymbol{A}=&[\boldsymbol{a}_1\phantom{0}\boldsymbol{a}_2\phantom{0}\ldots\phantom{0}
\boldsymbol{a}_V]
\end{align}
From the relation between $\boldsymbol{Y}$ and $\{y(t)\}$, it is
clear that we have
$\boldsymbol{c}_u=\boldsymbol{c}((u-1)V+v),\forall v$ and
$\boldsymbol{a}_v=\boldsymbol{a}((u-1)V+v),\forall u$. The set of
sparse encoding vectors $\{\boldsymbol{c}_u\}$ and
$\{\boldsymbol{a}_v\}$ are devised to satisfy the following two
conditions:
\begin{itemize}
\item[C1] Those nonzero entries in $\{\boldsymbol{c}_u\}$ and
$\{\boldsymbol{a}_v\}$ are respectively set to $\beta$ and
$\gamma$. Also, we have $\|\boldsymbol{c}_u\|_{0}=Q,\forall u$,
and $\|\boldsymbol{a}_v\|_{0}=R,\forall v$.
\item[C2] The sparse vectors in $\{\boldsymbol{c}_u\}$ are
orthogonal to each other, i.e.
$\boldsymbol{c}_{u_1}^T\boldsymbol{c}_{u_2}=0, \forall u_1\neq
u_2$; and vectors in $\{\boldsymbol{a}_v\}$ are orthogonal to each
other, i.e. $\boldsymbol{a}_{v_1}^T\boldsymbol{a}_{v_2}=0, \forall
v_1\neq v_2$.
\end{itemize}
Let $S(t)\triangleq\{\lambda_{i,j}\}_{i\in S_{c}(t),j\in
S_{a}(t)}$ denote the set of elements that are simultaneously
sensed/hashed at the $t$th time instant. Such a set of elements is
also called as a bin, as illustrated in Fig. \ref{fig_example}.
Clearly we have $|S(t)|=QR$. Also, condition C2 ensures that the
sets of elements sensed at different time instants are disjoint,
i.e.
\begin{align}
S(t_1)\cap S(t_2)=\emptyset, \forall t_1\neq t_2
\end{align}
In addition, since we have $UQ=M$ and $RV=N_t$, the union of the
sets is equal to the whole set of elements of
$\boldsymbol{\Lambda}$, i.e.
\begin{align}
S(1)\cup\ldots\cup S(T_0)=\{\lambda_{i,j}\}_{i=1,j=1}^{i=M,j=N_t}
\end{align}




After a single round of full-coverage scanning, no element in
$\boldsymbol{\Lambda}$ is left unscanned. Nevertheless, since each
element in $\boldsymbol{\Lambda}$ is scanned along with other
elements at each time, we still cannot identify the exact location
of the largest component from the measurements $\boldsymbol{Y}$.
To identify the strongest component, we need to perform a few
rounds, say $L$ rounds, of full-coverage scanning, and for each
round of scanning, we randomly generate $\boldsymbol{A}$ and
$\boldsymbol{C}$ by altering locations of the nonzero entries in
$\{\boldsymbol{c}_u\}$ and $\{\boldsymbol{a}_v\}$. We will show
that we can identify the largest element in $\boldsymbol{\Lambda}$
via a simple decoding scheme from these $L$ rounds of measurements
$\{\boldsymbol{Y}_{l}\}_{l=1}^L$. Here $\boldsymbol{Y}_{l}$
denotes the measurement matrix collected at the $l$th round of
scanning, and we have
\begin{align}
\boldsymbol{Y}_l=|\boldsymbol{C}_l^H\boldsymbol{\Lambda}\boldsymbol{A}_l+\boldsymbol{N}_l|
\label{eqn5}
\end{align}
where $\boldsymbol{C}_l\triangleq
[\boldsymbol{c}^{l}_{1}\phantom{0}\ldots\phantom{0}\boldsymbol{c}^{l}_{U}]$
and $\boldsymbol{A}_l\triangleq
[\boldsymbol{a}^{l}_{1}\phantom{0}\ldots\phantom{0}\boldsymbol{a}^{l}_{V}]$
are sparse encoding matrices used in the $l$th round of scanning.

\subsection{Practical Considerations of Devising $\{\boldsymbol{c}_u\}$}
As discussed in the previous subsection, the vectors
$\{\boldsymbol{c}_u\}$ are devised to be strictly sparse with $Q$
nonzero elements. To fulfill this requirement, we need to have an
independent control of the reflection amplitude for each IRS
element, which increases not only the hardware complexity but also
the energy consumption \cite{AbeywickramaZhang20,TangChen21}. Moreover, to generate a
strictly sparse vector $\boldsymbol{c}_u$, many of the reflection
amplitudes have to be set far less than one, which reduces the
reflection efficiency.


To cope with these issues, we wish to find a set of passive
beamforming vectors $\{\boldsymbol{v}_u\}$ with constant modulus,
and the corresponding vectors $\{\boldsymbol{c}_u=\boldsymbol{\bar
D}_R^H\boldsymbol{v}_u\}$ are approximately-sparse vectors with
$Q$ dominant entries. Mathematically, this problem can be
formulated as follows. Given any $Q$ columns from
$\boldsymbol{\bar D}_R$, denoted as
$\{\boldsymbol{p}_q\}_{q=1}^Q$, let $\boldsymbol{Q}$ be a matrix
constructed by these $Q$ columns and $\boldsymbol{\bar{Q}}$ be a
matrix obtained by removing those $Q$ columns from
$\boldsymbol{\bar D}_R$. We seek a constant-modulus vector
$\boldsymbol{v}$ such that $|\boldsymbol{v}^H\boldsymbol{Q}|$ is a
quasi-constant magnitude vector with its magnitude as large as
possible, whereas $\|\boldsymbol{v}^H\boldsymbol{\bar{Q}}\|_2$ is
as small as possible. There are different approaches to tackle
this problem. Inspired by \cite{WangFang20a}, here we formulate
the above problem into the following optimization:
\begin{align}
\max_{\boldsymbol{v}} \quad &
\sum_{q=1}^{Q} \log_2(1+ \boldsymbol{v}^H \boldsymbol{p}_{q} \boldsymbol{p}_{q}^H \boldsymbol{v})  \nonumber \\
{\text s.t.} \quad & |v_i| = 1, \forall i = 1,\ldots ,M
\label{v-opt-robust}
\end{align}
where $v_i$ denotes the $i$th entry of the vector
$\boldsymbol{v}$. It was shown in \cite{WangFang20a} that the
solution to \eqref{v-opt-robust} is nearly orthogonal to
$\boldsymbol{\bar{Q}}$. Moreover, entries of the vector
$\boldsymbol{v}^H\boldsymbol{Q}$ have quasi-constant magnitudes
thanks to the logarithmic function. As a result, the resulting
vector $\boldsymbol{c}=\boldsymbol{\bar D}_R^H\boldsymbol{v}$ is
an approximately sparse vector with $Q$ dominant entries. Note
that the generated vectors $\{\boldsymbol{c}_u\}$ cannot be
strictly orthogonal to each other since they are no longer
strictly sparse vectors. Nevertheless, for each round of
full-coverage scanning, it is not difficult to attain
near-orthogonality by making sure that the sets of dominant
elements sensed at different time instants are disjoint. Also, as
will be shown later in this paper, our proposed algorithm requires
the indices of those nonzero elements in $\{\boldsymbol{c}_u\}$ to
identify the best alignment. As $\{\boldsymbol{c}_u\}$ generated
from (\ref{v-opt-robust}) are approximately sparse, we only
consider these $Q$ prominent entries as nonzero elements of
$\boldsymbol{c}_u$.


The above optimization can be efficiently solved by a
manifold-based algorithm, which has a very low computational
complexity of $\mathcal{O}(M)$ \cite{WangFang20a}. Besides, the
reflecting vectors $\{\boldsymbol{v}_u\}$ can be calculated and
stored in advance. It will not exert an extra computational burden
on the beam alignment task.



\begin{figure*}[!t]
\centering
\includegraphics[width=6in] {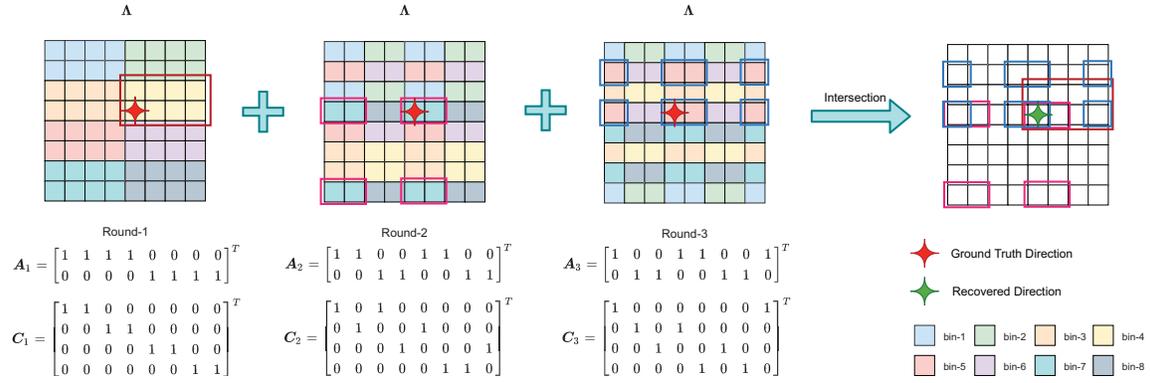}\hfill
\caption{An illustrative example to show how the proposed scheme
recovers the location of the largest element in
$\boldsymbol{\Lambda}$. A bin consists of those entries in
$\boldsymbol{\Lambda}$ that are simultaneously sensed at each time
instant. For different rounds, we randomize the entries that fall
into different bins. In each round, the bin associated with the
largest measurement $y(t)$ is highlighted with a rectangular box.
By performing the intersection operation, the entry associated
with the dominant path can be estimated as the common element of
those rectangular boxes.} \label{fig_example}
\end{figure*}

\section{Proposed Beam Alignment Method: LOS Scenarios} \label{sec:ba-LOS}
In the previous section, we have discussed how to devise the
active and passive beam training sequences
$\{\boldsymbol{f}(t)\}_{t=1}^T$ and
$\{\boldsymbol{v}(t)\}_{t=1}^T$. In this section, we discuss how
to identify the best beam alignment (i.e., identify the largest
element in $\boldsymbol{\Lambda}$) from the received phaseless
measurements $\{y(t)\}_{t=1}^T$ for the LOS scenario. Note that
this estimation task is performed at the receiver, i.e. user.


We first consider the scenario where there is only one nonzero
element or only one prominent nonzero element in the matrix
$\boldsymbol{\Lambda}$. This scenario has important practical
implications and arises as a result when both the BS-IRS channel
and the IRS-user channel are LOS-dominated. As reported in many
real-world channel measurements
\cite{AkdenizLiu14,Muhi-EldeenIvrissimtzis10}, the power of mmWave
LOS path is much higher (about 13 dB higher) than the sum of the
power of NLOS paths. When it comes to the THz bands, the power of
the LOS component is about 20dB higher than the power of the
scattering components \cite{PriebeKannicht13}. Therefore it can be
expected that $\boldsymbol{\Lambda}$ contains only one dominant
element when the LOS path is available for both the BS-IRS and the
IRS-user links.

To better illustrate the idea of the proposed scheme, we consider
a noiseless case where the measurements $\{\boldsymbol{Y}_l\}$ are
not corrupted by noise. When $\boldsymbol{\Lambda}$ contains only
one dominant element, it is clear that the measurement matrix
$\boldsymbol{Y}_l$ collected at the $l$th round of scanning
contains only one prominent component whose location can be easily
determined. Suppose that, for each $l$,
$\boldsymbol{Y}_l(u_l,v_l)$ is the largest element in
$\boldsymbol{Y}_l$. From (\ref{eqn5}), we have
\begin{align}
\boldsymbol{Y}_l(u_l,v_l)=(\boldsymbol{c}^{l}_{u_l})^H\boldsymbol{\Lambda}\boldsymbol{a}^{l}_{v_l}
\end{align}
Let $S_{u_l}^{(l)} \triangleq \{ i| \boldsymbol{c}^{l}_{u_l}(i)
>0\}$ denote the indices of the nonzero elements in
$\boldsymbol{c}^{l}_{u_l}$, and $S_{v_l}^{(l)} \triangleq \{ j|
\boldsymbol{a}^{l}_{v_l}(j) >0\}$ denote the indices of the
nonzero elements in $\boldsymbol{a}^{l}_{v_l}$.

Let $(i^{\ast},j^{\ast})$ denote the location index of the
dominant entry in $\boldsymbol{\Lambda}$. It is clear that we have
\begin{align}
i^{\ast} \in S_{u_l}^{(l)}, \quad j^{\ast} \in S_{v_l}^{(l)},
\quad \forall l
\end{align}
As a result, we have
\begin{align}
i^{\ast} \in  \bigcap_{l=1}^L S_{u_l}^{(l)} \qquad j^{\ast} \in
\bigcap_{l=1}^L S_{v_l}^{(l)} \label{opt-index}
\end{align}
On the other hand, since $\boldsymbol{c}^{l}_{u_l}$ and
$\boldsymbol{a}^{l}_{v_l}$ are randomly generated for each round
of scanning, it is unlikely that there exists another location
index $(i,j)$ which lies in the intersection of these sets,
particularly when $L$ is large. Therefore we can determine the
location index of the dominant entry, $(i^{\ast},j^{\ast})$, as
\begin{align}
i^{\ast} = \bigcap_{l=1}^L S_{u_l}^{(l)} \qquad j^{\ast}
=\bigcap_{l=1}^L S_{v_l}^{(l)} \label{intersection-scheme}
\end{align}
In Fig. \ref{fig_example}, we provide an illustrative example to
show how to identify the largest element via an intersection
scheme.

Our following theorem shows that such an intersection scheme can
recover the true location of the dominant element with a high
probability. The main results are summarized as follows.


\newtheorem{theorem}{Theorem}
\begin{theorem}
Suppose $Q < M$ and $R < N_t$. After $L$ rounds of full-coverage
scanning, from (\ref{intersection-scheme}), we can identify the
location of the nonzero element in $\boldsymbol{\Lambda}$ with a
probability greater than
\begin{align}
{\rm P}\geq  & \left(  1- (M-1) \left(\frac{Q-1}{M-1} \right)^L \right)  \times \left( 1- (N_t-1)
\left(\frac{R-1}{N_t-1} \right)^L
\right) \nonumber\\
=&P(Q,L,M) \times P(R,L,N_t) \label{eqn7}
\end{align}
where the function $P(x,y,z) $ is defined as
\begin{align}
P(x,y,z) \triangleq  1- (z-1) \left(\frac{x-1}{z-1} \right)^y
\end{align}
\label{theorem1}
\end{theorem}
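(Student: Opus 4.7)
The plan is to decompose the identification event into two independent sub-events, bound the failure probability of each by a union bound over ``confusing'' competitor indices, and then combine them via independence of the row and column randomization.

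First I would formalize the decomposition. Identifying the unique location $(i^{\ast},j^{\ast})$ of the dominant entry of $\boldsymbol{\Lambda}$ via \eqref{intersection-scheme} succeeds if and only if the row intersection $\bigcap_{l=1}^{L}S_{u_l}^{(l)}$ equals the singleton $\{i^{\ast}\}$ \emph{and} the column intersection $\bigcap_{l=1}^{L}S_{v_l}^{(l)}$ equals $\{j^{\ast}\}$. By construction $i^{\ast}\in S_{u_l}^{(l)}$ and $j^{\ast}\in S_{v_l}^{(l)}$ for every $l$, so failure occurs only if some false candidate survives all $L$ intersections on at least one side. Let $A$ denote the event ``row identification is correct'' and $B$ denote ``column identification is correct''; the goal becomes lower-bounding $\mathrm{P}(A\cap B)$.

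Next I would bound $\mathrm{P}(A)$ by a union bound over wrong row indices. For a fixed index $i\neq i^{\ast}$ and a fixed round $l$, the set $S_{u_l}^{(l)}$ is the unique bin (of size $Q$) containing $i^{\ast}$ in that round; since the bin assignment randomly partitions $\{1,\ldots,M\}$ into $U=M/Q$ blocks of size $Q$ (Condition C2 together with the randomization of nonzero positions across rounds), the conditional probability that $i$ also lies in that particular bin is exactly $(Q-1)/(M-1)$. Independence across the $L$ rounds then gives $\mathrm{P}(i\in\bigcap_{l=1}^{L}S_{u_l}^{(l)})=((Q-1)/(M-1))^{L}$. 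A union bound over the $M-1$ competing indices yields $\mathrm{P}(A^{c})\leq (M-1)((Q-1)/(M-1))^{L}$, so $\mathrm{P}(A)\geq P(Q,L,M)$. The same argument applied to the column side, using that $\boldsymbol{a}^{l}_{v_l}$ has $R$ nonzero entries out of $N_t$, gives $\mathrm{P}(B)\geq P(R,L,N_t)$.

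Finally I would combine the two bounds. Since the paper generates $\boldsymbol{C}_l$ and $\boldsymbol{A}_l$ independently in each round, the events $A$ and $B$ depend on disjoint sources of randomness and are therefore independent; hence $\mathrm{P}(A\cap B)=\mathrm{P}(A)\,\mathrm{P}(B)\geq P(Q,L,M)\,P(R,L,N_t)$, which is the claimed bound. The only delicate step is justifying the exact collision probability $(Q-1)/(M-1)$: one must verify that the nonzero positions of $\boldsymbol{c}^{l}_{u}$ across $u=1,\ldots,U$ form a uniformly random partition of $\{1,\ldots,M\}$ into $Q$-sized blocks, so that by symmetry each of the $M-1$ positions other than $i^{\ast}$ is equally likely to share the block of $i^{\ast}$; this is where I expect the main care to be needed, whereas the union bound and the independence argument are routine.
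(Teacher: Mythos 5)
Your proposal is correct and follows essentially the same route as the paper: decompose into independent row and column identification events, use the per-round collision probability $(Q-1)/(M-1)$ (resp.\ $(R-1)/(N_t-1)$) for a fixed competitor, raise to the $L$th power by independence across rounds, and union-bound over the $M-1$ (resp.\ $N_t-1$) competitors. The only cosmetic difference is that the paper first writes the exact non-collision probability via inclusion--exclusion before relaxing to the same union bound (an intermediate expression it reuses for Theorem~2), and it handles a non-singleton intersection by a $1/x$ random tie-break rather than declaring failure, neither of which changes the bound.
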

\begin{proof}
See Appendix \ref{appendixA}.
\end{proof}

\subsection{Sample Complexity Analysis}
We now analyze the sample complexity of the proposed scheme. To
ensure that we can recover the index of the dominant element of
$\boldsymbol{\Lambda}$ with a probability exceeding a predefined
threshold $p_0$, we need
\begin{align}
{\rm P}\geq &  P(Q,L,M) \times P(R,L,N_t)   \geq p_0
\end{align}
For simplicity, we set $p_0 = p_1 \times p_2$, and let
\begin{align}
P(Q,L,M) \geq p_1  \label{P-Q-1} \\
P(R,L,N_t) \geq p_2 \label{P-R-2}
\end{align}
From \eqref{P-Q-1}, it is easy to verify that
\begin{align}
L \geq \frac{ \log \frac{1-p_1}{M-1}}{ \log \frac{Q-1}{M-1}} =
\frac{ \log ({M-1})+c_1 }{ \log \frac{M-1}{Q-1}} \triangleq L_1
\end{align}
where $c_1 \triangleq \log(({1-p_1})^{-1})  >0 $ is a constant. On
the other hand, from \eqref{P-R-2}, we have
\begin{align}
L \geq \frac{ \log \frac{1-p_2}{N_t-1}}{ \log \frac{R-1}{N_t-1}} =
\frac{ \log ({N_t-1})+ c_2}{ \log \frac{N_t-1}{R-1}} \triangleq
L_2
\end{align}
where $c_2 = \log(({1-p_2})^{-1}) $ is a constant.


Therefore, the total number of measurements $T$ required for
identifying the strongest component with a probability at least
$p_0$ can be calculated as
\begin{align}
T =  {UV}L \geq  UV \max \{ L_1,L_2\}
\end{align}
Since $L_1$ ($L_2$) is in the order of $\mathcal{O}(\log(M))$
($\mathcal{O}(\log(N_t))$), the proposed intersection-based scheme
has a sample complexity of $\mathcal{O}(UV\max\{\log(M),
\log(N_t)\})$. Recall that $U=M/Q$, where $Q$ is a parameter of
user's choice. Thus, we can choose a proper $Q$ to obtain a small
value of $U$. To be specific, given $T$ and other system
parameters, we can try different combinations of $(Q,L)$ to
determine the one which yields the highest probability of correct
beam alignment. Here we provide an example to show how many
measurements are exactly required to achieve perfect beam
alignment with a decent probability. Suppose $N_t=128$, $M=256$,
$R=4$. For different choice of $Q$ and $L$, our proposed method
can identify the best beam alignment with a probability no less
than:
\begin{itemize}
  \item $Q=32$, $L=4$, $T=UVL=1024$: ${\rm P}\geq 94.43\%$
  \item $Q=16$, $L=3$, $T=UVL=1536$: ${\rm P}\geq 94.65\%$
  \item $Q=16$, $L=4$, $T=UVL=2048$: ${\rm P}\geq 99.69\%$
\end{itemize}
From this example, we see that the proposed scheme can achieve a
substantial training overhead reduction as compared with the
exhaustive search scheme which requires a total number of
measurements up to $T=MN_t=32768$.


\subsection{Extension To The Noisy Case}
The proposed intersection scheme may not work well in the presence
of noise. In the sequel, inspired by \cite{HassaniehAbari28}, we
develop a noisy version of the intersection scheme. The basic idea
is to assign each element in $\boldsymbol{\Lambda}$ a probability
instead of a $0/1$-hard vote, and turn the intersection operation
into a product of probabilities.

Specifically, for each round of scanning and each index $(i,j)$ of
the element in $\boldsymbol{\Lambda}$, we define an indicator
matrix, $\boldsymbol{I}^{(l)}_{ (i,j)} \in \mathbb \{0,1\}^{U
\times V}$, with its $(u,v)$th entry defined as
\begin{align}
\boldsymbol{I}^{(l)}_{(i,j)} (u,v) \triangleq \begin{cases} 1&
\text{if $\boldsymbol{c}^l_{u}(i)\times
\boldsymbol{a}^l_{v}(j)\neq 0$} \\ 0&\text{otherwise}.
\end{cases} \label{cov-func}
\end{align}
where $\boldsymbol{x}(j)$ denotes the $j$th element of the vector
$\boldsymbol{x}$. Clearly, if the element
$\boldsymbol{\Lambda}(i,j)$ is sensed at the time instant $(
(u-1)V+v  )$ of the $l$th round, the value in \eqref{cov-func}
would be $1$; otherwise it would be $0$.

Based on the indicator matrix, we can further calculate the
``probability'' matrix $\boldsymbol{\mathcal{P}}^{(l)}$ with its
$(i,j)$th entry defined as
\begin{align}
\boldsymbol{\mathcal{P}}^{(l)} ({i,j})\triangleq \left( {\text
{vec}} (\boldsymbol{I}^{(l)}_{(i,j)} ) \right)^T {\text{vec}}
(\boldsymbol{Y}_l \circ \boldsymbol{Y}_l^{\ast}) \label{eqn6}
\end{align}
where $ {\text{vec}} (\cdot)$ denotes the vectorization operator
and $\circ$ represents the Hadamard product. Here (\ref{eqn6})
uses the received magnitude measurement as a weight to calculate
the probability of the element $\boldsymbol{\Lambda}(i,j)$ being a
dominant entry in $\boldsymbol{\Lambda}$.

Generally, if $\boldsymbol{\mathcal{P}}^{(l)} ({i,j})\geq
\epsilon$, where $\epsilon$ is a pre-specified threshold, then
$(i,j)$ is regarded as a candidate index of the dominant entry in
$\boldsymbol{\Lambda}$. Let $\mathcal{F}_l = \{ (i,j) |
\boldsymbol{\mathcal{P}}^{(l)} ({i,j}) \geq \epsilon \}$ denote
the set of candidate indices obtained from the $l$th round. After
$L$ rounds of scanning, we can determine the location index of the
dominant entry $(i^{\ast},j^{\ast})$ via a maximum likelihood (ML)
estimation
\begin{align}
(i^{\ast},j^{\ast}) = \max_{(i,j) \in \mathcal{F}} \quad\prod_{l =
1}^L \boldsymbol{\mathcal{P}}^{(l)} ({i,j}) \label{prob-tot}
\end{align}
where $\mathcal{F} \triangleq \cup_{l=1}^L \mathcal{F}_l$ denotes
the set comprising all candidate indices.
The overall algorithm for LOS scenarios are summarized in Algorithm \ref{Algorithm1}.

\begin{algorithm}[H]   
\caption{Proposed beam alignment algorithm for LOS scenarios}
\label{Algorithm1}
\begin{algorithmic}[1]
\STATE Generate $\{ \boldsymbol{A}_l \}_{l=1}^L$ according to
Section \ref{sec:sensing}.A; \STATE Generate $\{ \boldsymbol{C}_l
\}_{l=1}^L$ according to Section \ref{sec:sensing}.B; \STATE
Obtain received signals $\{\boldsymbol{Y}_l\}_{l=1}^L$ for $L$
rounds; \FOR{$l = 1,\ldots,L$} \STATE Calculate the ``probability"
function via \eqref{eqn6} for all indices $(i,j)$. \ENDFOR \STATE Determine the best beam
direction index $(i^{\ast},j^{\ast})$ of the dominant element in
$\boldsymbol{\Lambda}$ via \eqref{prob-tot}.
\end{algorithmic}
\end{algorithm}


\subsection{Computational Complexity Analysis}
The major computational task of our proposed beam estimation
method is to calculate the probability matrix defined in
\eqref{eqn6}. According to \eqref{eqn6}, each entry of the
probability matrix is calculated as an inner product of two
$UV$-dimensional vectors. Since each element in
$\boldsymbol{\Lambda}$ is sensed only once in each round, the
indicator matrix $\boldsymbol{I}^{(l)}_{(i,j)}$ contains only one
nonzero entry. Therefore each entry of the probability matrix can
be calculated by multiplying this nonzero entry with its
corresponding entry in ${\text{vec}} (\boldsymbol{Y}_l \circ
\boldsymbol{Y}_l^{\ast})$. As a result, calculating this entire
probability matrix $\boldsymbol{\mathcal{P}}^{(l)} \in \mathbb
R^{M \times N_t}$ involves a computational complexity of
$\mathcal{O}(M N_t)$. Note that our proposed method requires to
compute a set of probability matrices
$\{\boldsymbol{\mathcal{P}}^{(l)}\}_{l=1}^L$, which has a
computational complexity in the order of $\mathcal{O}(M N_t L)$,
where $L$ is the number of rounds of full-coverage scanning. After
obtaining $\{\boldsymbol{\mathcal{P}}^{(l)}\}_{l=1}^L$, we need to
calculate the objective function defined in \eqref{prob-tot},
which is a Hadamard product of the set of probability matrices
$\{\boldsymbol{\mathcal{P}}^{(l)}\}_{l=1}^L$ and involves a
computational complexity of $\mathcal{O}(MN_t (L-1))$. From the
above discussion, we see that the overall computational complexity
of our proposed method is in the order of $\mathcal{O}(M N_t L)$.

As a comparison, as analyzed in \cite{WangFang20}, if we employ a
compressed sensing-based method to recover the cascade channel,
the method needs to solve a sparse signal recovery problem of size
$T\times (M N_t)$, whose computational complexity is of
$\mathcal{O}(M N_t T \bar K)$ for greedy methods and of
$\mathcal{O}(M^3 N_t^3)$ for more sophisticated methods such as
the basis pursuit. Here $T$ denotes the number of measurements
used for channel estimation, and $\bar K$ is the sparsity level of
the cascade channel matrix. Our analysis shows that for our
proposed method, only a few rounds of full-coverage scanning (say,
$L<10$) are sufficient to identify the best beam alignment with a
probability close to $1$. Hence generally we have $L\ll T\bar K$.
Therefore, our proposed method has a much lower complexity even
compared with the least computationally demanding compressed
sensing method.


\section{Proposed Beam Alignment Method for NLOS Scenarios} \label{sec:ba-NLOS}
In this section, we extend our proposed method to a more general
case where there are multiple comparable nonzero elements in the
sparse matrix $\boldsymbol{\Lambda}$. Such a scenario arises as a
result of NLOS transmissions when either the BS-IRS's LOS path or
the IRS-user's LOS path is blocked by obstacles. For the case
where multiple paths of comparable qualities are available from
the BS to the user, the signals from different paths may combine
destructively at the receiver, thus creating difficulties to
identify the strongest path. Due to this destructive multi-path
effect, a direct application of the above proposed method may
result in beam misalignment. To address this issue, we propose a
modified version of the set intersection-based method for
identifying the strongest component in $\boldsymbol{\Lambda}$.

We first consider a noiseless case to illustrate the idea of the
proposed estimation scheme. Recalling \eqref{eqn9}, we have
\begin{align}
y(t)=\bigg|\beta\gamma\sum_{S(t)}\lambda_{i,j}\bigg|
\end{align}
where $S(t)=\{\lambda_{i,j}\}_{i\in S_c(t),j\in S_{a}(t)}$ denotes
the set of elements that are simultaneously sensed at the $t$th
time instant. According to the number of nonzero elements in the
set $S(t)$, the associated received signal $y(t)$ is called as a
nullton, a singleton, and a multiton if:
\begin{itemize}
\item Nullton: The received signal $y(t)$ is a nullton if
its associated set $S(t)$ contains no nonzero element.
\item Singleton: The received signal $y(t)$ is a singleton
if its associated set $S(t)$ includes only a single nonzero
element.
\item Multiton: The received signal $y(t)$ is a multiton if its
associated set $S(t)$ includes more than one nonzero elements.
\end{itemize}
Also, if the measurements collected within a certain round of
scanning, say $\boldsymbol{Y}_{l}$, only contain singleton and
nullton measurements, then this round of scanning is referred to
as a no-multiton (NM) round.

The basic idea of our proposed scheme is to utilize the
measurements associated with those NM rounds of scanning to
identify the largest component in $\boldsymbol{\Lambda}$. Since
the NM rounds consist of only singleton and nullton measurements,
it means that signals from different paths are separately sensed
and will not be hashed to contribute to a same measurement. Thus
the signals from different paths will not be combined
destructively at the receiver.

Nevertheless, we first need to differentiate NM rounds from those
rounds of scanning which include multiton measurements. Suppose
that the sparse matrix $\boldsymbol{\Lambda}$ contains $K$ nonzero
elements, where $K\ll MN_t$. Recall that for each round of
scanning, the sets of elements sensed at different time instants
are disjoint, i.e. $S(t_1)\cap S(t_2)=\emptyset,\forall t_1\neq
t_2$, and the union of the sets is the whole set of elements of
$\boldsymbol{\Lambda}$, i.e. $S(1)\cup\ldots\cup
S(T)=\{\lambda_{i,j}\}_{i=1,j=1}^{i=M,j=N_t}$. Therefore for an NM
round, it should contain exactly $K$ singleton measurements and
$UV-K$ nullton measurements. On the other hand, if a round of
scanning is not an NM round, then it should include more than
$UV-K$ nullton measurements because some of the $K$ nonzero
elements in $\boldsymbol{\Lambda}$ are sensed simultaneously.
Motivated by this observation, we can consider those rounds with
the smallest number of nulltons as NM rounds, without assuming the
knowledge of $K$. Note that determining whether $y(t)$ is a
nullton measurement or not is simple in the noiseless case because
we have $y(t)=0$ if $y(t)$ is a nullton. In the noisy case, an
energy detector can be employed to differentiate nulltons from
singletons and multitons.

After those NM rounds are identified, we can employ the
intersection-based scheme to find the strongest component in
$\boldsymbol{\Lambda}$. Suppose there are $\bar{L}$ NM rounds
among all $L$ rounds, and denote the set of NM rounds as
$\mathcal{L}$. Suppose that, for each $l\in\mathcal{L}$,
$\boldsymbol{Y}_l(u_l,v_l)$ is the largest (in magnitude) element
in $\boldsymbol{Y}_l$. From (\ref{eqn5}), we have
\begin{align}
\boldsymbol{Y}_l(u_l,v_l)=(\boldsymbol{c}^{l}_{u_l})^H\boldsymbol{\Lambda}\boldsymbol{a}^{l}_{v_l}
\end{align}
Let $S_{u_l}^{(l)} \triangleq \{ i| \boldsymbol{c}^{l}_{u_l}(i)
>0\}$ denote the indices of the nonzero elements in
$\boldsymbol{c}^{l}_{u_l}$, and $S_{v_l}^{(l)} \triangleq \{ j|
\boldsymbol{a}^{l}_{v_l}(j) >0\}$ denote the indices of the
nonzero elements in $\boldsymbol{a}^{l}_{v_l}$. Let
$(i^{\ast},j^{\ast})$ denote the location index of the largest (in
magnitude) component in $\boldsymbol{\Lambda}$. It is clear that
we have
\begin{align}
i^{\ast} \in S_{u_l}^{(l)}, \quad j^{\ast} \in S_{v_l}^{(l)},
\quad \forall l\in\mathcal{L}
\end{align}
As a result, we can estimate $(i^{\ast},j^{\ast})$ as
\begin{align}
i^{\ast} = \bigcap_{l \in  \mathcal{L} }S_{u_l}^{(l)} \qquad
j^{\ast} =\bigcap_{l\in \mathcal{L}}S_{v_l}^{(l)}
\label{intersection-scheme-2}
\end{align}



\subsection{Theoretical Analysis}
From the above discussion, we see that our proposed method relies
on those measurements collected within the NM rounds of scanning
to find the best beam alignment. A natural question is: how likely
a round of full-coverage scanning is an NM-round of scanning? We
have the following results regarding this question.
\newtheorem{proposition}{Proposition}
\begin{proposition}
Suppose the location indices of the $K$ nonzero components in
$\boldsymbol{\Lambda} \in \mathbb C^{M \times N_t}$ are uniformly
distributed. The sparse encoding matrices $\boldsymbol{C}_l \in
\{0,1 \} ^{M\times U}$ and $\boldsymbol{A}_l \in \{0,1\}^{N_t
\times V}$ for the $l$th round of scanning are designed to satisfy
conditions C1 and C2. Let $E$ denote the event that the $l$th
round of scanning is an NM round. Then we have
\begin{align}
P(E) = (RQ)^K \frac{\binom{UV}{K}}{\binom{MN_t}{K}} \triangleq p
,\label{NM-graph-probability}
\end{align}
where $R = N_t/V$, and $Q=M/U$. \label{proposition1}
\end{proposition}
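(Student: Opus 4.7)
The plan is to recast the probability as a combinatorial counting problem over how the $K$ nonzero positions distribute themselves across the bins induced by $\boldsymbol{C}_l$ and $\boldsymbol{A}_l$. First I would invoke the bin structure established just before the proposition: under conditions C1 and C2, the sets $\{S(t)\}_{t=1}^{T_0}$ for $T_0 = UV$ are pairwise disjoint, each has cardinality $|S(t)| = QR$, and their union is the full index set $\{(i,j) : 1 \le i \le M,\, 1 \le j \le N_t\}$, which is consistent since $UQ \cdot VR = M N_t$. Hence the $UV$ bins form a partition of the $MN_t$ entries of $\boldsymbol{\Lambda}$ into blocks of size $QR$.

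Next I would translate the event $E$ into a combinatorial statement. By definition, the $l$th round is an NM round iff no measurement is a multiton, which is equivalent to saying that the $K$ nonzero entries of $\boldsymbol{\Lambda}$ fall into $K$ distinct bins (one per bin), with the remaining $UV-K$ bins yielding nulltons. Under the assumption that the location indices of the $K$ nonzeros are uniformly distributed, every $K$-subset of the $MN_t$ positions is equally likely, so $P(E)$ equals the number of favorable $K$-subsets divided by $\binom{MN_t}{K}$.

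Then I would count the favorable configurations in two stages: (i) choose the $K$ bins that will each contain exactly one nonzero entry, giving $\binom{UV}{K}$ choices; (ii) inside each of these $K$ chosen bins, pick one of the $QR$ positions to host the nonzero entry, contributing a factor of $(QR)^K$ in total. Multiplying and dividing by the total count yields
\begin{align}
P(E) = \frac{(QR)^K \binom{UV}{K}}{\binom{MN_t}{K}},
\end{align}
which is exactly the stated expression.

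There is no serious obstacle here; the argument is a clean two-step counting. The only points that need care are (a) justifying that the bins genuinely partition the index set rather than merely cover or overlap it, which rests on C1 giving $|S(t)| = QR$, C2 giving disjointness, and the identities $UQ = M$, $VR = N_t$ forcing the total count to be $MN_t$; and (b) interpreting the phrase ``uniformly distributed location indices'' as the natural uniform distribution over $K$-subsets of the $MN_t$ entries, so that favorable/total ratios are the right quantity to compute.
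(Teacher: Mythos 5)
Your counting argument is correct and complete: the bins form a partition of the $MN_t$ entries into $UV$ blocks of size $QR$ (by C1, C2, and $UQ=M$, $VR=N_t$), the NM event is exactly that the $K$ nonzeros land in $K$ distinct bins, and the favorable count $\binom{UV}{K}(QR)^K$ over $\binom{MN_t}{K}$ gives the stated formula (a quick sanity check: $K=1$ yields $P(E)=QR\cdot UV/(MN_t)=1$, as it must). The paper itself omits the proof, deferring to an extension of a proposition in a cited reference, so there is no in-paper argument to compare against; your derivation is a valid self-contained justification of the claim under the stated uniform-over-$K$-subsets assumption.
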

\begin{proof}
This result is an extension of Proposition 1 in \cite{LiFang19}. The
proof is thus omitted here.  
\end{proof}

Here we provide an example to show the probability of a round of
full-coverage scanning being an NM-round of scanning. Suppose $N_t
= 128$, $M = 256$, and $R=4$. For different values of $Q$ and $K$,
we have
\begin{itemize}
\item $Q=64$, $K=4$: $p = 95.40\%$
\item $Q= 32$, $K=4$: $p=97.69\%$
\item $Q=32$, $K=2$: $p=99.61\%$
\end{itemize}
We see that with a reasonable choice of $Q$, the measurement
matrix $\boldsymbol{Y}_l$ is very likely to contain only singleton
and nullton measurements. Hence after a few rounds of scanning, it
can be expected that most of these rounds of scanning are NM
rounds.

Based on Proposition \ref{proposition1}, the probability with
which the set-intersection estimator (\ref{intersection-scheme-2})
can find the largest (in magnitude) element in
$\boldsymbol{\Lambda}$ can be characterized as follows.
\begin{theorem}
Suppose there are $1 <K \ll MN_t$ nonzero elements in the sparse
matrix $\boldsymbol{\Lambda}$. After $L$ rounds of full-coverage
scanning, from \eqref{intersection-scheme-2}, the proposed method
can identify the location of the largest element in
$\boldsymbol{\Lambda}$ with a probability ${\rm P}'$ that can be
bounded as
\begin{align}
{\rm P}' \geq \sum_{l = 0}^{L}  g(Q,l,M) \times g(R,l,N_t)    \times
\left( \binom{L}{l} p^{l}(1-p)^{L-l}  \right) \label{prob-2}
\end{align}
where $p$ is defined in \eqref{NM-graph-probability} and the
function $g(Q,l,M)$ is defined as
\begin{align}
 g(Q,l,M)  \triangleq 1- \sum_{j=1}^{Q-1} (-1)^{(j-1)} \binom{M-1}{j}
\bigg(\frac{\binom{M-1-j}{Q-1-j}}{\binom{M-1}{Q-1}} \bigg)^l.
\end{align}
\label{theorem2}
\end{theorem}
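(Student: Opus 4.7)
The plan is to decompose the success probability according to how many of the $L$ rounds are NM. Since the sparse encoding matrices $\{(\boldsymbol{C}_l,\boldsymbol{A}_l)\}$ are drawn independently and identically across rounds, Proposition~\ref{proposition1} yields that the number $L_{\mathrm{NM}}$ of NM rounds is distributed as $\mathrm{Binomial}(L,p)$. Conditioning on $L_{\mathrm{NM}} = \bar{\ell}$, the rule \eqref{intersection-scheme-2} uses only those $\bar\ell$ rounds, so the overall success probability equals
\begin{align}
\sum_{\bar\ell=0}^{L} \binom{L}{\bar\ell} p^{\bar\ell}(1-p)^{L-\bar\ell}\, P\bigl(\text{intersection succeeds}\mid L_{\mathrm{NM}}=\bar\ell\bigr),
\end{align}
and it suffices to lower bound the conditional success probability by $g(Q,\bar\ell,M)\, g(R,\bar\ell,N_t)$.

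On every NM round each bin hosts at most one nonzero entry, so the noiseless measurement magnitude in that bin equals $|\beta\gamma\lambda_{i,j}|$ for its unique occupant (or zero for a nullton). Consequently the argmax of $\boldsymbol{Y}_l$ pinpoints the bin containing $\lambda_{i^*,j^*}$, which forces $i^* \in S_{u_l}^{(l)}$ and $j^* \in S_{v_l}^{(l)}$ for every $l \in \mathcal{L}$. The rule thus succeeds exactly when no competing row index $k\neq i^*$ survives $\bigcap_{l\in\mathcal{L}} S_{u_l}^{(l)}$ and no competing column index $k'\neq j^*$ survives $\bigcap_{l\in\mathcal{L}} S_{v_l}^{(l)}$. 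Because $\{\boldsymbol{C}_l\}$ and $\{\boldsymbol{A}_l\}$ are generated independently, these two failure events decouple and may be treated separately.

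For the row side, conditional on the $l$th round being NM and $i^* \in S_{u_l}^{(l)}$, the remaining $Q-1$ indices of $S_{u_l}^{(l)}$ may be treated as a uniformly random $(Q-1)$-subset of $[M]\setminus\{i^*\}$, independently across NM rounds. The probability that a fixed $j$-subset of $[M]\setminus\{i^*\}$ lies inside $S_{u_l}^{(l)}$ equals $\binom{M-1-j}{Q-1-j}/\binom{M-1}{Q-1}$, which by independence becomes its $\bar\ell$-th power across rounds. Inclusion--exclusion over the $M-1$ competing row indices, truncated at $j=Q-1$ (beyond which $\binom{M-1-j}{Q-1-j}=0$), then produces exactly $1-g(Q,\bar\ell,M)$ as the row-failure probability; the column side is analogous and gives $1-g(R,\bar\ell,N_t)$. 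Multiplying the two success probabilities and summing against the Binomial weights reproduces \eqref{prob-2}.

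The step I expect to be the main obstacle is justifying the independence reduction used in the third paragraph: strictly speaking, conditioning on $L_{\mathrm{NM}}=\bar\ell$ couples the random partition to the placement of the remaining $K-1$ nonzero entries, so the sets $\{S_{u_l}^{(l)}\setminus\{i^*\}\}_{l\in\mathcal{L}}$ are not literally uniform and independent $(Q-1)$-subsets. Since the $K$ nonzero locations are exchangeable within $[M]\times[N_t]$ and $K\ll MN_t$, this coupling is weak, and one can either bound the induced deviation explicitly or, following the proof of Theorem~\ref{theorem1}, exploit the exchangeability of the non-dominant indices so that \eqref{prob-2} is recovered as a valid lower bound. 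Everything else in the argument is bookkeeping.
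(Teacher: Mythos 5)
Your proposal follows essentially the same route as the paper: condition on the $\mathrm{Binomial}(L,p)$ number of NM rounds (obtained from Proposition~\ref{proposition1} and independence across rounds), then bound the conditional success probability by the inclusion--exclusion product $g(Q,\bar\ell,M)\,g(R,\bar\ell,N_t)$ inherited from step $(c)$ in the proof of Theorem~\ref{theorem1}. The conditioning subtlety you flag in your final paragraph is a fair point, but the paper's own proof glosses over it as well --- it simply ``directly applies the equality $(c)$ of \eqref{Sc-nonempty}'' without justifying that conditioning on a round being NM leaves the bin partition effectively uniform --- so your treatment is, if anything, more careful than the original.
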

\begin{proof}
See Appendix \ref{appC}.
\end{proof}

Here we provide an example to show the probability \eqref{prob-2}
of identifying the largest component in $\boldsymbol{\Lambda}$.
Suppose $N_t = 128$, $M = 256$, and $R = 4$. For different values
of $Q$, $K$ and $L$, we have
\begin{itemize}
\item $Q=32$, $K=4$, $L=4$, $T=  1024$: ${\rm P}'\geq  91.52\%$
\item $Q= 32$, $K=4$, $L=5$, $T = 1280$: ${\rm P}' \geq 98.63\%$
\item $Q=16$, $K=2$, $L=4$, $T = 2048$: ${\rm P}' \geq 99.65\%$
\end{itemize}
Compared this example with the one in Section \ref{sec:ba-LOS}.B,
we can see that the proposed method for the NLOS scenario can
achieve a decent probability of exact recovery with a sample
complexity similar to (or slightly higher than) that of the LOS
scenario.

\subsection{Extension To The Noisy Case}
\label{LOS-decode-robust} We now extend our proposed estimation
method to the noisy case. When the measurements are corrupted by
noise, the received signal can be expressed as
\begin{align}
\boldsymbol{Y}_l(u,v)=
|(\boldsymbol{c}^{l}_{u})^H\boldsymbol{\Lambda}\boldsymbol{a}^{l}_{v}
+ n_{uv}^l| \quad \forall u,v
\end{align}
where $n_{uv}^l \sim \mathcal{CN}(0,\sigma^2)$ denotes the
additive noise. We first need to determine whether
$\boldsymbol{Y}_l (u,v)$ is a nullton measurement or not. Such a
problem can be formulated as a binary hypothesis test problem:
\begin{align}
  &H_0: \boldsymbol{Y}_l(u,v) = |{n}_{uv}^{l}| ,\nonumber \\
  &H_1: \boldsymbol{Y}_l(u,v)=   \bigg|\beta\gamma\sum_{i\in S_{u}^{(l)},j\in
S_{v}^{(l)}}\lambda_{i,j}+{n}_{uv}^{l}\bigg|
\end{align}
where $S_{u}^{(l)} \triangleq \{ i| \boldsymbol{c}^{l}_{u}(i)
>0\}$ denote the indices of the nonzero elements in
$\boldsymbol{c}^{l}_{u}$, and $S_{v}^{(l)} \triangleq \{ j|
\boldsymbol{a}^{l}_{v}(j) >0\}$ denote the indices of the
nonzero elements in $\boldsymbol{a}^{l}_{v}$. A simple energy detector can
be used to perform the detection
\begin{align}
\boldsymbol{Y}_l(u,v)\underset{H_0}{\overset{H_1}{\gtrless}}\epsilon
\label{energy-detect}
\end{align}
Given a specified false alarm probability, the threshold
$\epsilon$ can be easily determined since $\boldsymbol{Y}_l(u,v)$
follows a Rayleigh distribution under $H_0$. Since the received
signals are corrupted by noise, the selection of $\epsilon$ can
result in different performance. To harness the advantage of
multiple full-coverage scanning rounds, we often set $\epsilon$ to
be a small value. Next, we choose those rounds of scanning with
the least number of nulltons as NM rounds. Specifically, let
$\mathcal{L}$ with $|\mathcal{L}|=\bar{L}$ denote the set of NM
rounds. We can utilize the robust scheme developed in Section
\ref{sec:ba-LOS}.C to estimate the location of the largest entry
in $\boldsymbol{\Lambda}$, i.e.,
\begin{align}
(i^{\ast},j^{\ast}) = \max_{(i,j) \in \mathcal{F}} \quad\prod_{l
\in \mathcal{L} }
 \boldsymbol{\mathcal{P}}^{(l)} ({i,j}) \label{prob-tot-2}
\end{align}
where $ \boldsymbol{\mathcal{P}}^{(l)} ({i,j})$ is defined in
\eqref{eqn6}.

For clarity, the algorithm is summarized in Algorithm
\ref{Algorithm2}. Following a similar analysis in Section
\ref{sec:ba-LOS}.C, we know that the proposed scheme has a
computational complexity of order $\mathcal{O}(M N_t \bar{L})$,
where $\bar{L}$ denotes the number of NM rounds.

\begin{algorithm}[H]
\caption{Proposed beam alignment algorithm for NLOS scenarios}
\label{Algorithm2}
\begin{algorithmic}[1]
\STATE Generate $\{ \boldsymbol{A}_l \}_{l=1}^L$ according to
Section \ref{sec:sensing}.A; \STATE Generate $\{ \boldsymbol{C}_l
\}_{l=1}^L$ according to Section \ref{sec:sensing}.B; \STATE
Obtain received signals $\{\boldsymbol{Y}_l\}_{l=1}^L$ for $L$
rounds; \FOR{$l=1,\ldots,L$} \STATE Determine whether the received
signal ${Y}_l(u_l,v_l)$ is a nullton or not via the energy
detector \eqref{energy-detect}. Count the number of nulltons for
each round. \ENDFOR \STATE Find rounds with the smallest number of
nulltons and regard them as NM rounds. \FOR{$l = 1,\ldots,L$}
\IF{the round-$l$ is an NM round} \STATE Calculate the
``probability" function via \eqref{eqn6} for all indices $(i,j)$.
\ENDIF \ENDFOR \STATE Determine the best beam direction index
$(i^{\ast},j^{\ast})$ of the dominant element in
$\boldsymbol{\Lambda}$ via \eqref{prob-tot-2}.
\end{algorithmic}
\end{algorithm}



\begin{figure*}[!t]
\centering
 \subfigure[Success rate versus $T$ for LOS scenarios.]
 {\includegraphics[width=2.8in]{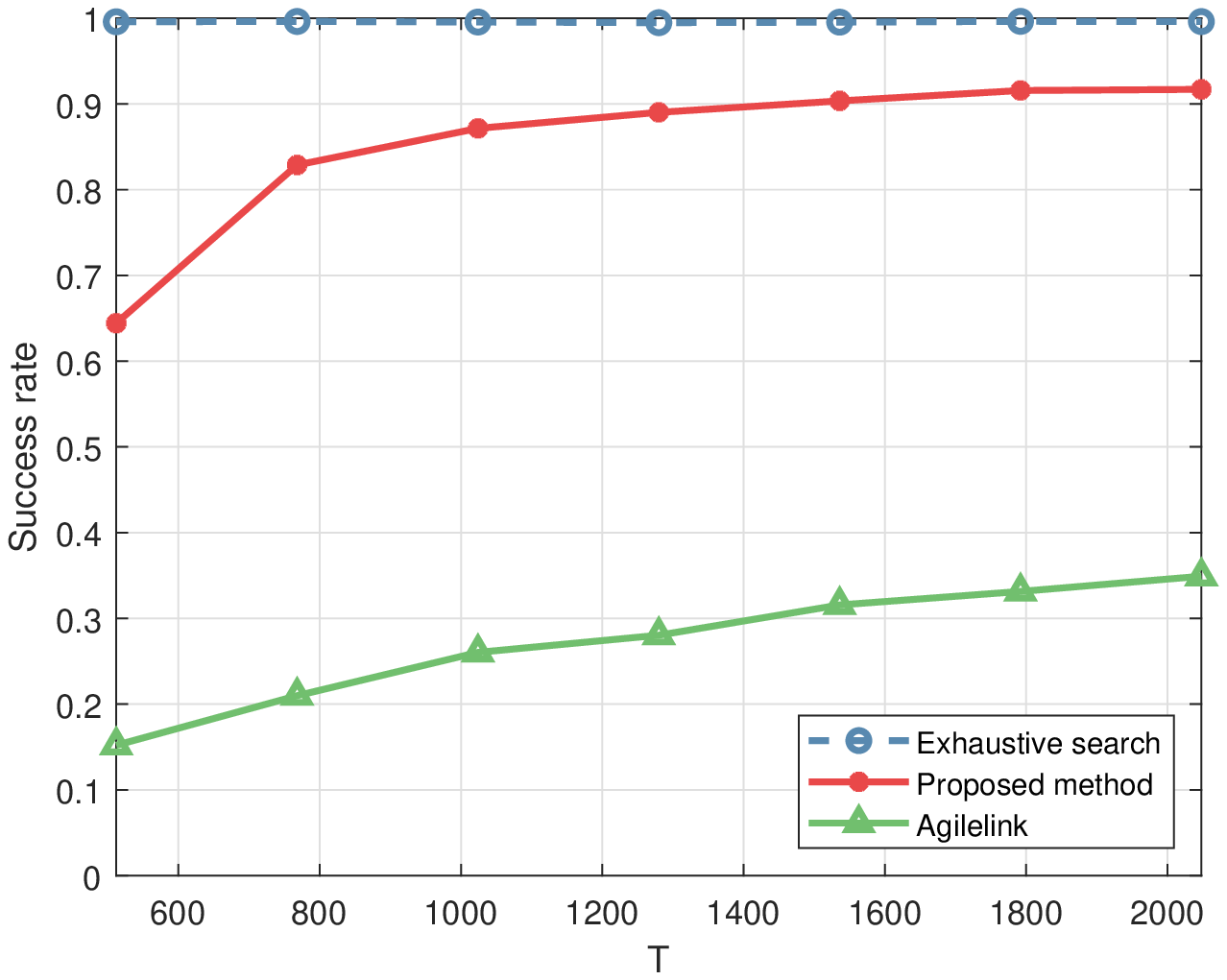}} \hfil
\subfigure[BGR versus $T$ for LOS
scenarios.]{\includegraphics[width=2.8in]{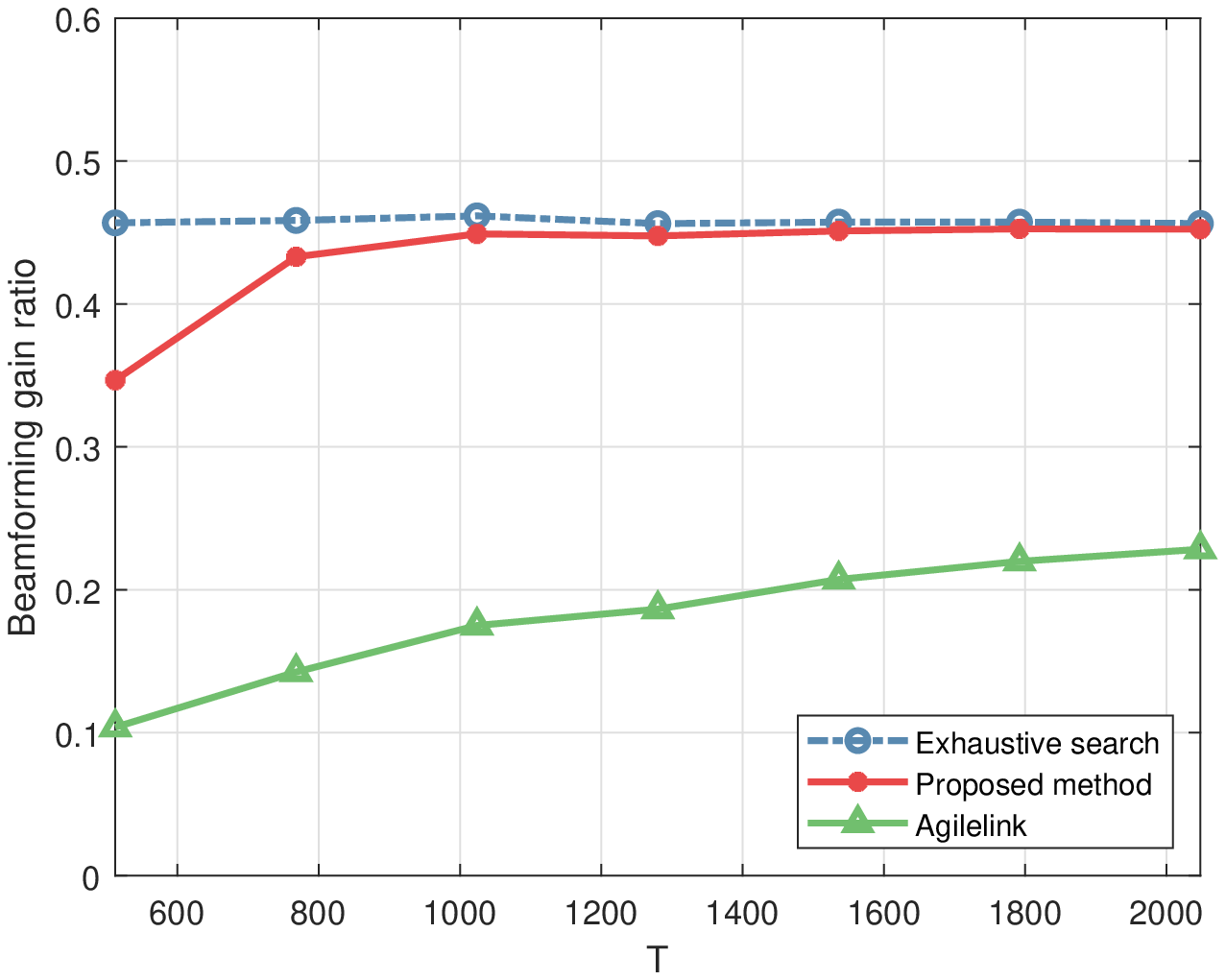}}\hfil
 \subfigure[Success rate versus $T$ for NLOS scenarios.]
 {\includegraphics[width=2.8in]{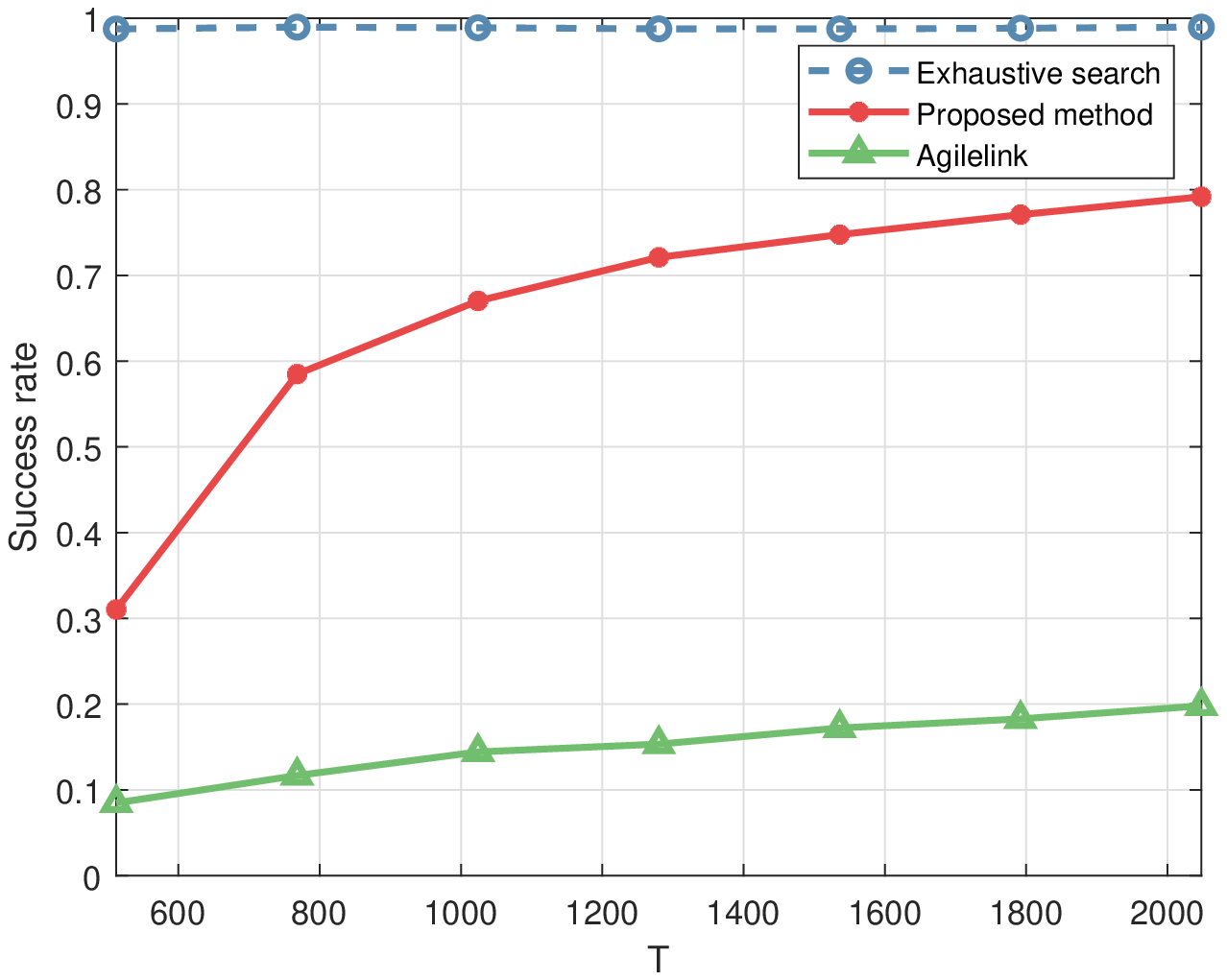}} \hfil
\subfigure[BGR versus $T$ for NLOS
scenarios.]{\includegraphics[width=2.8in]{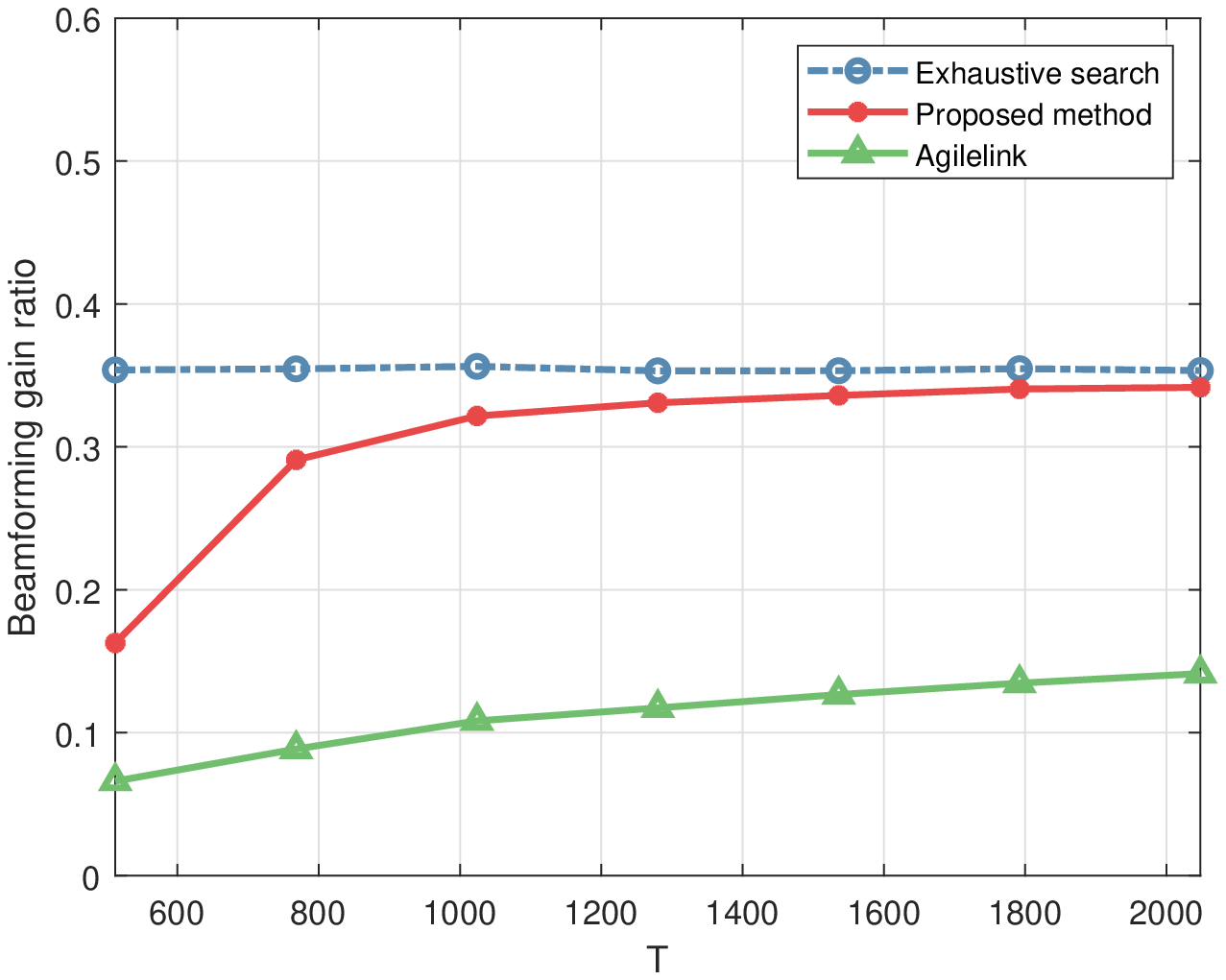}}
\caption{Success rate and beamforming gain ratio versus $T$ for
LOS and NLOS scenarios in the noisy case, $\text{SNR} = -20$dB.} \label{fig_L_large}
\end{figure*}


\begin{figure*}[!t]
\centering
 \subfigure[Success rate versus SNR for LOS scenarios.]
 {\includegraphics[width=2.8in]{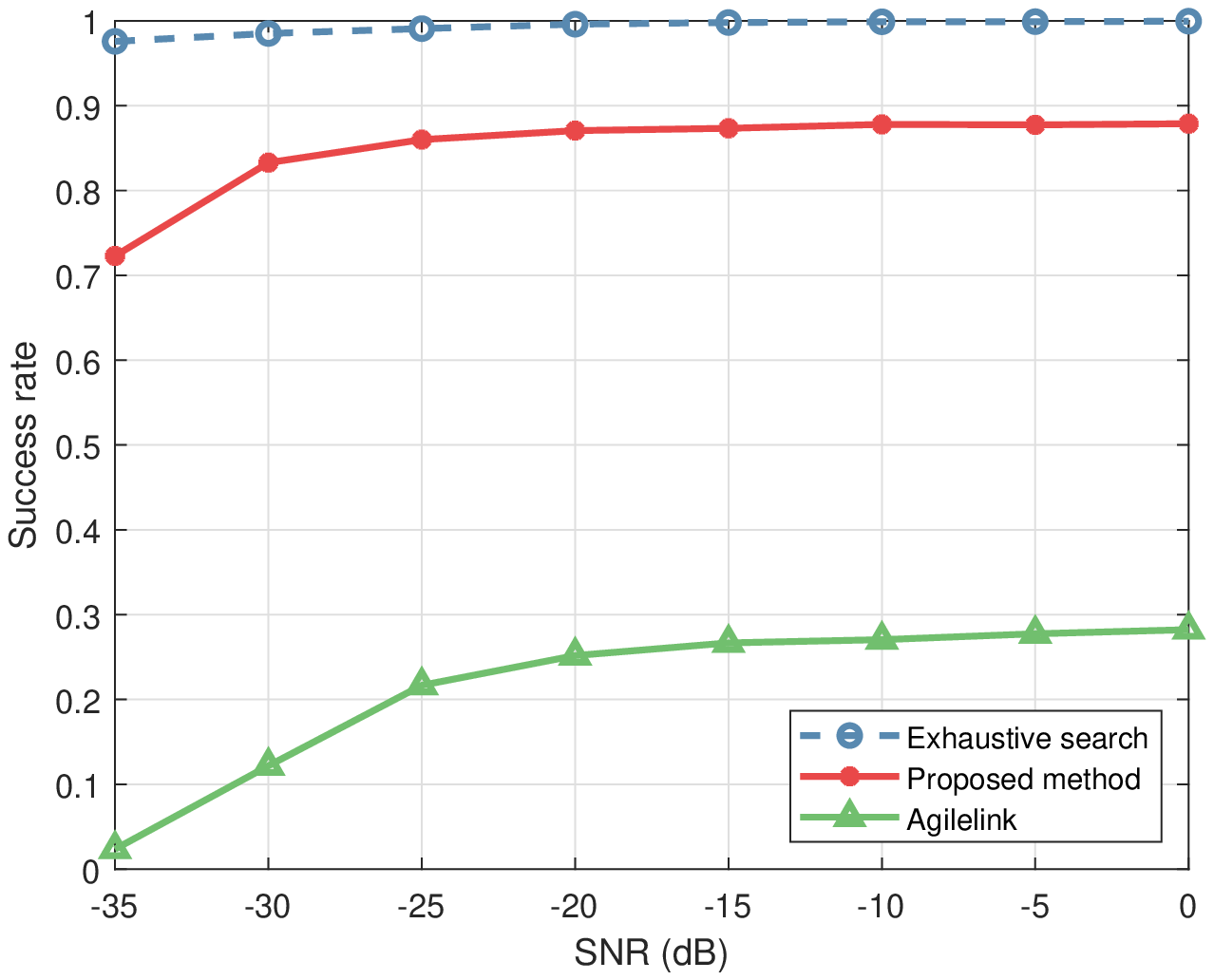}} \hfil
\subfigure[BGR versus SNR for LOS
scenarios.]{\includegraphics[width=2.8in]{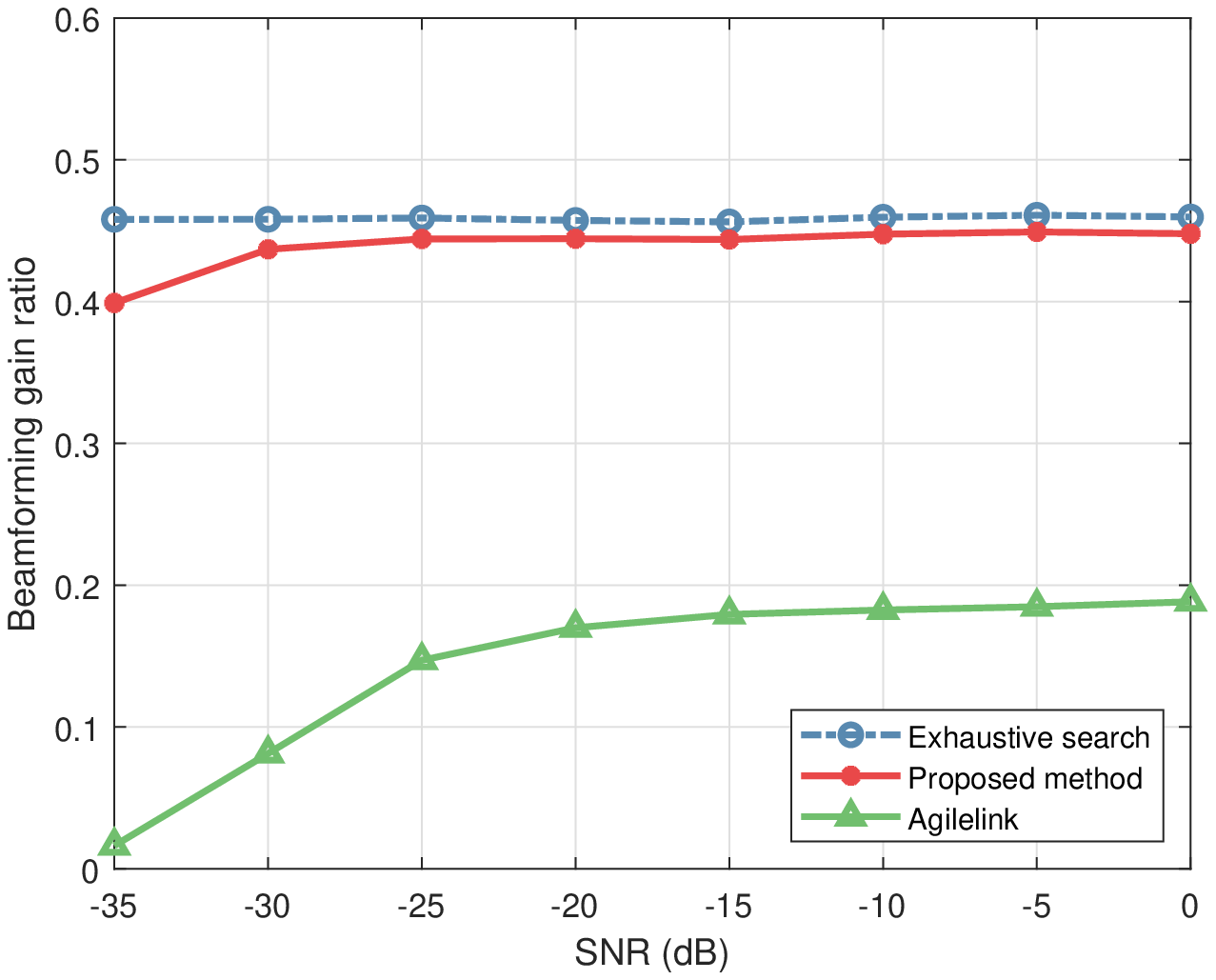}}\hfil
 \subfigure[Success rate versus SNR for NLOS scenarios.]
 {\includegraphics[width=2.8in]{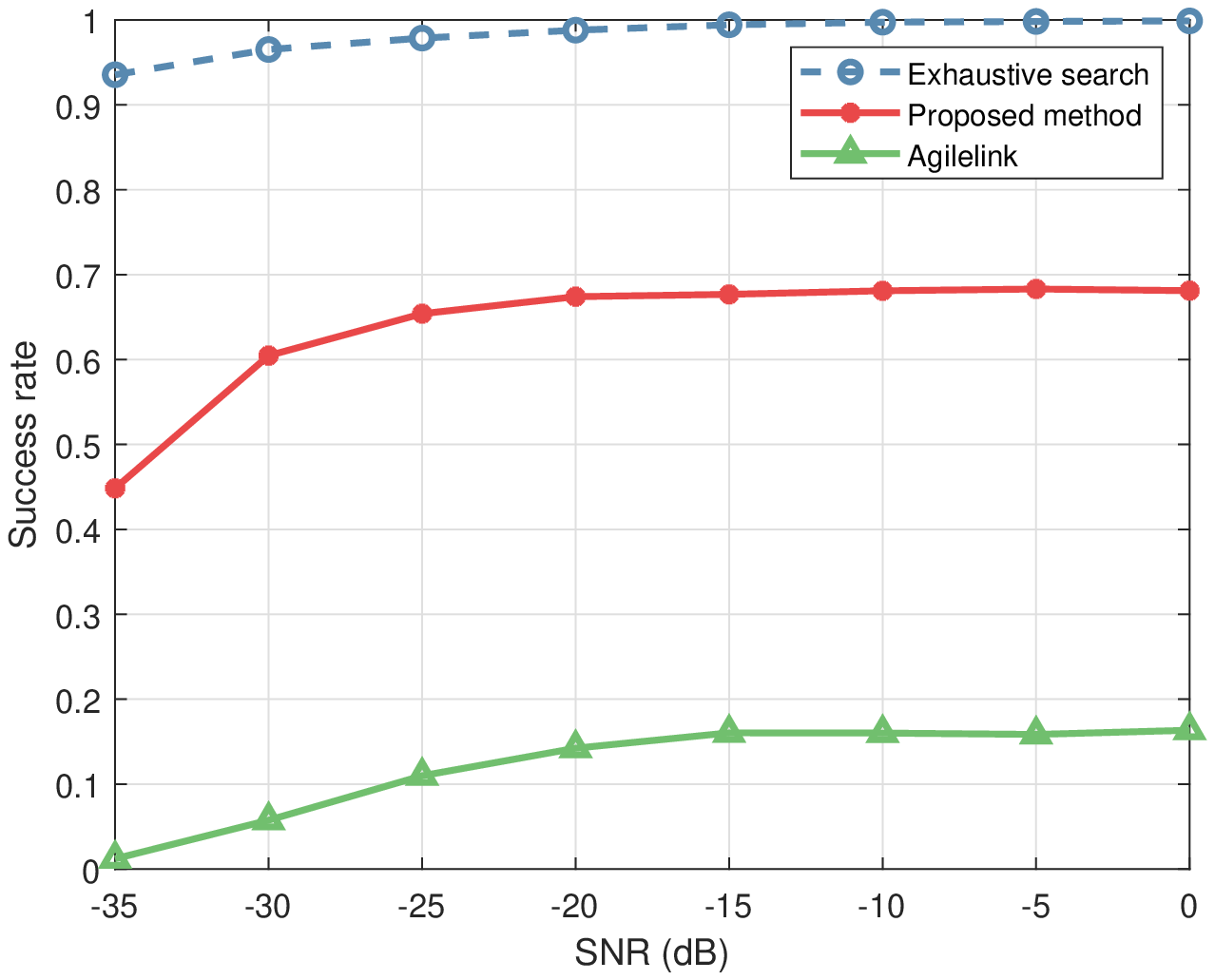}} \hfil
\subfigure[BGR versus SNR for NLOS
scenarios.]{\includegraphics[width=2.8in]{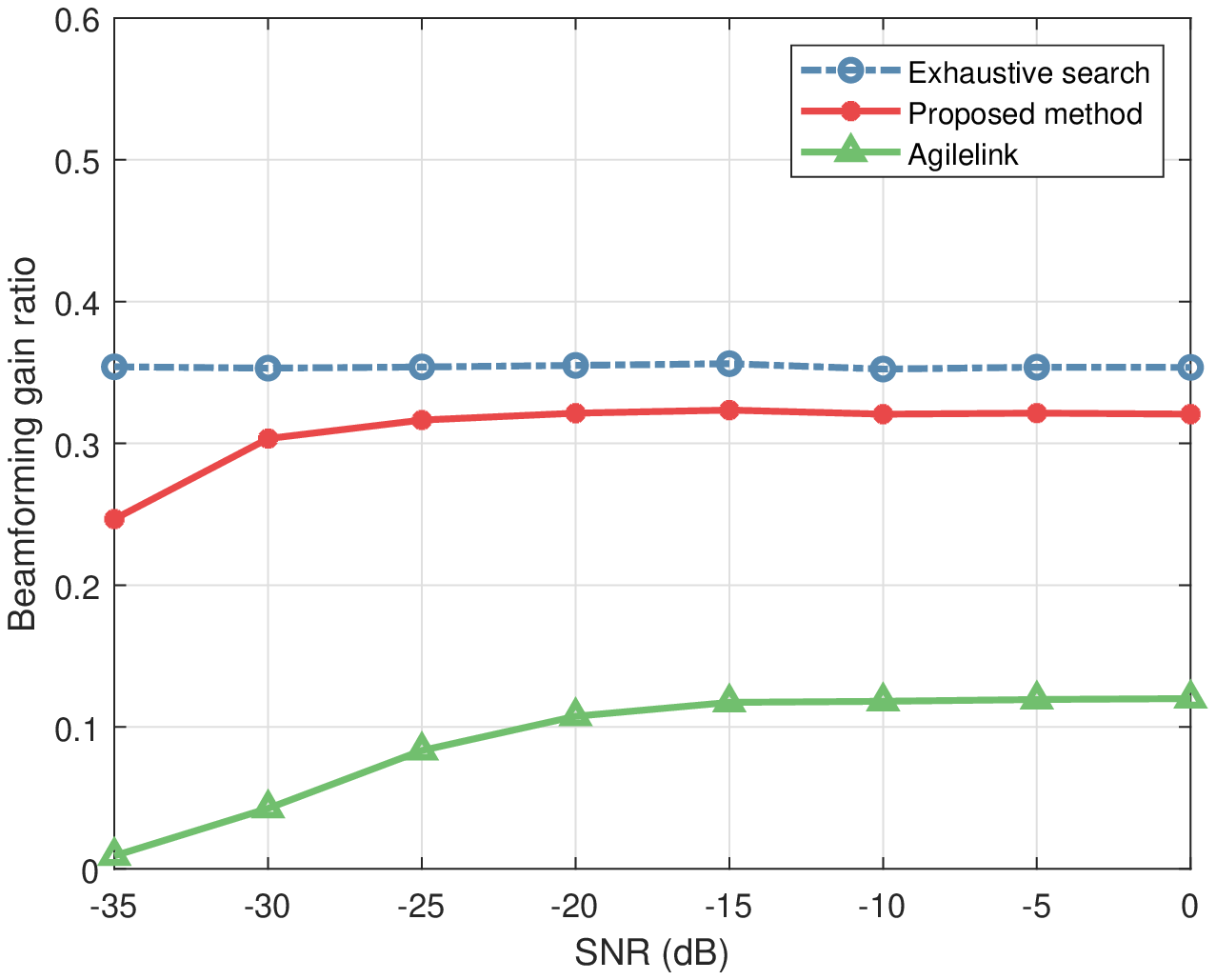}}
\caption{Success rate and beamforming gain ratio versus SNR for
LOS and NLOS scenarios in the noisy case, $T =1024$.} \label{fig_SNR_large}
\end{figure*}

\section{Simulation results} \label{sec:simulation}
In this section, we provide simulation results to illustrate the
performance of our proposed method. \textcolor{black}{In our simulations, we
consider a mmWave system operating at a carrier frequency of
28GHz. For the large-scale path loss, the reference channel power
gain at a distance of $1$m is set as $g_0 = -61.3$dB, the path
loss exponents of the BS-IRS and IRS-user links are set as
$\gamma_{BI} = 2.3$ and $\gamma_{IU} = 2$ respectively \cite{YouZheng20}. The
small-scale fading is modeled by Rician fading, with the BS-IRS
Rician factors set as $13.2$dB and the IRS-user Rician factors set
as $13.2$dB for LOS scenarios and $0$dB for NLOS scenarios. Also,
we assume the IRS adopts a UPA with $M = M_y \times M_z = 16
\times 16 = 256$ elements, and the BS employs a ULA with $N_t =
128$ antennas.} Two metrics are used to evaluate the performance of
the proposed method, namely, the success rate which is computed as
the ratio of finding the correct index of the largest (in
magnitude) component in $\boldsymbol{\Lambda}$, and the
beamforming gain ratio (BGR) which is defined as
\begin{align}
\gamma_{\text{bf}}\triangleq \mathbb E \left[  \frac{ \|
\boldsymbol{v}^H\boldsymbol{H} \boldsymbol{f}
 \|^2}{ \| \boldsymbol{v}_{\rm opt} ^H\boldsymbol{H} \boldsymbol{f}_{\rm opt} \|^2  } \right]
\end{align}
where $\boldsymbol{v}_{\rm opt}$ and $\boldsymbol{f}_{\rm opt}$
are optimally devised by assuming the full knowledge of the
cascade channel $\boldsymbol{H}$ via the method developed in
\cite{YuXu19}, $\boldsymbol{v}$ and $\boldsymbol{f}$ are devised
to align the BS's and IRS's beams to the strongest path, i.e.,
$\boldsymbol{v} =\sqrt{M} \boldsymbol{\bar D}_R (:,i^{\ast}) $,
$\boldsymbol{f} = \boldsymbol{D}_{N_t} (:,j^{\ast})$, and
$(i^{\ast}, j^{\ast})$ denotes the estimated index of the largest
component in $\boldsymbol{\Lambda}$. For a fair comparison, the
transmit beamforming vectors $\boldsymbol{f}(t)$ used by our
proposed method and other competing beam alignment schemes are
normalized to unit norm throughout our simulations. The
signal-to-noise ratio (SNR) is defined as $10 \log_{10}(\|
\boldsymbol{H}\|_F^2/(N_tM\sigma^2)$.

We now examine the performance of our proposed method and compare
it with the exhaustive beam search scheme discussed in Section
\ref{sec:system}.C and the state-of-the-art AgileLink scheme
\cite{HassaniehAbari28}. AgileLink is a beam alignment scheme
which also relies on the magnitude of measurements for recovery of
signal directions. Although originally developed for conventional
mmWave systems, its variant for array transmitter and receiver
(see Section 4.4 of \cite{HassaniehAbari28}) can be readily
applied to IRS-assisted mmWave systems since the IRS-assisted
signal model can be thought of as a conventional MIMO model.
AgileLink divides both the BS antennas and IRS elements into
several subarrays to form hashing beam patterns. For our proposed
method, the sparse encoding matrices $\{\boldsymbol{C}_l\}$ are
generated via the optimization-based method (\ref{v-opt-robust})
such that the corresponding passive beamforming vectors are of
constant modulus. In our simulations, we set $P=2$, $P'=2$, and
the AoAs and AoDs are randomly generated without assuming lying on
the discretized grid. For the LOS scenario, we assume that the
Rician factors for both the BS-IRS channel \eqref{ch-G} and the
IRS-user channel \eqref{hr} are set to $13.2$dB
\cite{AyachRajagopal14,AkdenizLiu14}. For the NLOS scenario, the
Rician factor for the IRS-user channel is set to $0$dB to simulate
the scenario where there are multiple comparable paths from the
IRS to the user.



In Fig. \ref{fig_L_large}, we plot the success rates and
beamforming gain ratios of respective methods as a function of the
total number of measurements $T$, where the SNR is set to
$\text{SNR} = -20$dB. For our proposed method, we set $R=8$ and
$Q=16$ and vary $L$ from $2$ to $8$, while for the AgileLink, for
each value of $T$, its parameters are carefully adjusted to
achieve its best performance. The exhaustive  search scheme is
included to provide the best achievable performance for any beam
alignment schemes, but it requires as many as $T=MN_t=32768$
measurements in total. From Fig. \ref{fig_L_large}, it can be seen
that the proposed method achieves a significant performance
improvement over the AgileLink scheme for both the LOS and NLOS
scenarios. Also, we observe that our proposed method, with only a
mild number of measurements (say, $T=1792$), can achieve a
beamforming gain close to the exhaustive search scheme.
Specifically, to achieve performance similar to the exhaustive
search scheme, the training overhead required by our proposed
method is only about 5\% of that needed by the exhaustive search
scheme.

In Fig. \ref{fig_SNR_large}, we plot the success rates and
beamforming gain ratios of respective methods as a function of the
SNR, where the number of measurements is set to $T=UVL = 16 \times
16 \times 4=1024$. From Fig. \ref{fig_SNR_large}, we can see that
our proposed method performs well even when the SNR is as low as
$-35$dB. Also, the performance of our proposed method is close to
that of the exhaustive search scheme while the proposed method
requires only $1024$ measurements, thus enjoying a substantial
training overhead reduction compared to the exhaustive search
scheme of $T=MN_t = 32768$. Moreover, the proposed method
outperforms the AgileLink scheme by a big margin across different
SNR regimes.

To examine the impact of the number of reflecting elements on the
proposed beam training method, in Fig. \ref{fig_M} , we plot the
success rates and beamforming gain ratios of respective algorithms
versus the number of reflecting elements for LOS scenarios, where
the SNR is set to $-20$dB and the total number of measurements
used for training (except the exhaustive search scheme) is set to
$T= 1024$. For our proposed method, we set $U=16$, $V=16$, and
$L=4$. Note that to make $U=M/Q$ unchanged, we adjust the value of
$Q$ accordingly for different choices of $M$. It can be observed
that the success rate of the proposed method keeps almost
unaltered as the number of reflecting elements $M$ increases,
whereas the AgileLink incurs a certain amount of performance loss
as $M$ grows.





Finally, to compare with SwiftLink, a fast compressed
sensing-based beam alignment algorithm that is robust against the
CFO \cite{MyersMezghani18}, we consider a simplified scenario
where the BS has the knowledge of the location of the IRS and has
aligned its beam to the LOS component between the BS and the IRS.
In this case, we only need to focus on beam training between the
IRS and the user. Our proposed method can be readily applied to
this scenario. Unlike AgileLink, SwiftLink cannot be
straightforwardly extended to joint BS-IRS-user beam training. But
it can be directly applied to this simplified scenario by treating
the IRS as an active transmitter. In our simulations, we set
$\rm{CFO} = 20$ppm, the carrier frequency is set to $28$GHz, the
bandwidth is set to $30$MHz and the phase noise is set to zero.
Also, due to the limitation inherent in the trajectory design,
SwiftLink only allows a limited number of measurements for
training, i.e. $T \in \{ T \leq 4M_y-2, \text{and }T = 4x, x
\text{ is an integer.} \}$. To satisfy this condition, we set
$T=60$ for SwiftLink. For a fair comparison, the number of
measurements for Agilelink and our proposed method is set to
$T=64$. Fig. \ref{fig_SNR_1D} depicts success rates and
beamforming gain ratios of respective algorithms. It can be
observed that our proposed method presents a substantial
performance improvement over Swiftlink, particularly in the low
SNR regime. \textcolor{black}{Since SwiftLink comprises several
sequential stages, the estimation accuracy of both the CFO and the
channel depends on the estimation results obtained in the previous
stages. Also, it is known that compressed sensing algorithms tend
to be fragile in low-SNR scenarios. In contrast, our proposed
method uses multi-directional beams to probe the channel and
relies on prominent measurements to identify the beam directions,
and thus is more resilient to low SNRs. This is probably the
reason why our proposed method outperforms SwiftLink, particularly
in the low SNR regime.}








\begin{figure*}[!t]
\centering
 \subfigure[Success rate versus $M$ for LOS scenarios]
 {\includegraphics[width=2.8in]{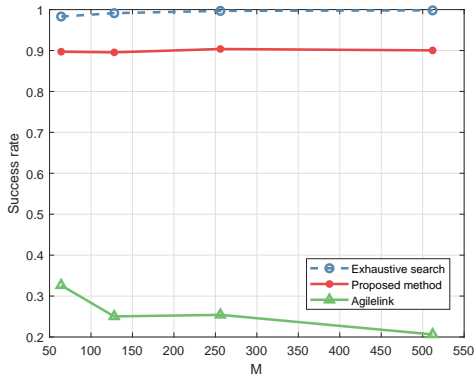}} \hfil
\subfigure[BGR versus $M$ for LOS
scenarios.]{\includegraphics[width=2.8in]{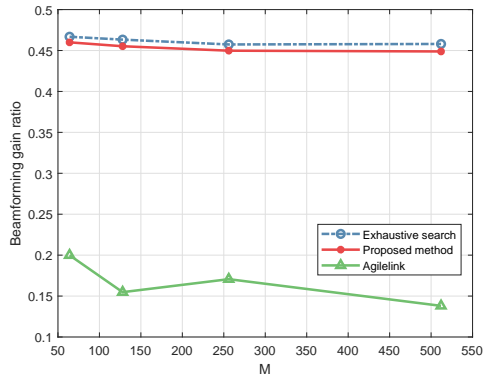}}\hfil
\caption{Success rate and beamforming gain ratio versus the
number of reflecting elements $M$ for LOS scenarios. } \label{fig_M}
\end{figure*}

\begin{figure*}[!t]
\centering
 \subfigure[Success rate versus SNR for LOS scenarios]
 {\includegraphics[width=2.8in]{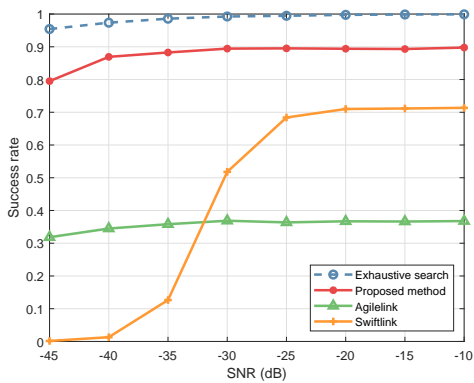}} \hfil
\subfigure[BGR versus SNR for LOS
scenarios.]{\includegraphics[width=2.8in]{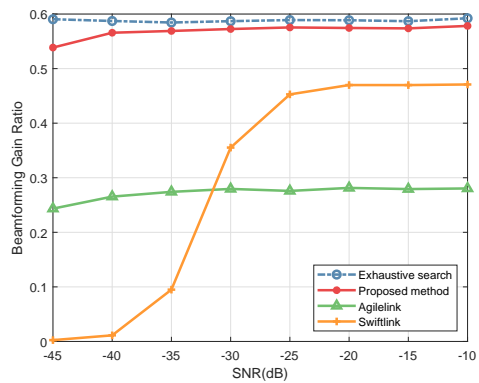}}\hfil
\caption{Success rate and beamforming gain ratio versus SNR for
LOS scenarios when the BS-IRS channel is perfectly aligned.} \label{fig_SNR_1D}
\end{figure*}

\section{Conclusions}
\label{sec:conclusion} In this paper, we studied the problem of
beam alignment for IRS-assisted mmWave/THz downlink systems. By
exploiting the inherent sparse structure of the BS-IRS-user
cascade channel, we devised multi-directional beam training
sequences to scan the angular space and proposed an efficient
set-intersection-based scheme to identify the best beam alignment
from compressive phaseless measurements. Theoretical and numerical
results show that the proposed method can perform reliable beam
alignment in the low SNR regime with a substantially reduced beam
training overhead.



\useRomanappendicesfalse
\appendices
\section{Proof of Theorem \ref{theorem1}}
\label{appendixA} We first define
\begin{align}
\mathcal{S} \triangleq \left \{ (i,j) | i \in
\bigcap_{l=1}^L\mathcal{S}_{u_l}^{(l)} , j \in
\bigcap_{l=1}^L\mathcal{S}_{v_l}^{(l)}  \right \}
\end{align}
From \eqref{opt-index}, we have $ (i^{\ast} ,j^{\ast} )\in
\mathcal{S} $. Let $B_{x} \triangleq \{ |\mathcal{S}| = x\}$
denote the event that the set $\mathcal{S}$ contains $RQ\geq x\geq
1$ elements in total, and $Z$ denote the event of identifying the
location of the largest element in $\boldsymbol{\lambda}$. We
therefore have
\begin{align}
\Pr(Z) = \sum_{x=1}^{RQ} \Pr(Z|B_x)\Pr(B_x)
\end{align}
where
\begin{align}
\Pr(Z|B_x) = \frac{1}{x}
\end{align}
Clearly, we have $\Pr(Z) \geq\Pr(B_1)$.
For simplicity, we only analyze the probability that the
intersection set $\mathcal{S}$ contains only one element.


Let $\mathcal{C}_l \triangleq \mathcal{S}_{u_l}^{(l)} - \{i^{\ast}
\}$ and $\mathcal{A}_l \triangleq\mathcal{S}_{v_l}^{(l)}
-\{j^{\ast} \}$. Define $\mathcal{S}_c \triangleq \bigcap_{l=1}^L
\mathcal{C}_l $ as the intersection set of the row indices, and
$\mathcal{S}_a \triangleq \bigcap_{l=1}^L \mathcal{A}_l $ as the
intersection set of column indices. Then we have
\begin{align}
\Pr(B_1) = \Pr( \mathcal{S}_c = \emptyset)  \times \Pr(
\mathcal{S}_a = \emptyset)
\end{align}
Since the two events $\mathcal{S}_c = \emptyset$ and
$\mathcal{S}_a = \emptyset$ are mutually independent, we can first
calculate $\Pr(\mathcal{S}_c = \emptyset)$ and the latter can be
obtained similarly.

Note that
\begin{align}
\Pr( \mathcal{S}_c = \emptyset)=1-\Pr( \mathcal{S}_c\neq\emptyset)
\end{align}
we now derive the probability of the event that the intersection
set $\mathcal{S}_c$ is non-empty. Without loss of generality, we
assume $i^{\ast}=M$. Let $D_m$ denote the event of
$m\in\mathcal{S}_c$. It can be easily verified that
\begin{align}
&\Pr \left(  \bigcap_{m \in J} D_m\right)  =
\bigg(\frac{\binom{M-1-j}{Q-1-j}}{\binom{M-1}{Q-1}} \bigg)^L , \nonumber \\
& J \subset \{1,2,\ldots,M-1 \}, |J| = j,j =1, \ldots,Q-1
\end{align}

The probability of the set $\mathcal{S}_c$ being non-empty can be
calculated as
\begin{align}
& \Pr(\mathcal{S}_c\neq \emptyset )
= \Pr \left(\bigcup_{m=1}^{M-1} D_m \right) \nonumber \\
\stackrel{(a)}=& \sum_{j=1}^{M-1}  (-1)^{j-1} \sum_{J \subset \{1,2,\ldots,M-1 \},
|J| = j}\Pr \left(  \bigcap_{m \in J} D_m\right) \nonumber \\
\stackrel{(b)}=&  \sum_{j=1}^{Q-1}  (-1)^{j-1} \sum_{J \subset \{1,2,\ldots,M-1 \},
|J| = j}\Pr \left(  \bigcap_{m \in J} D_m\right) \nonumber \\
\stackrel{(c)}= & \sum_{j=1}^{Q-1} (-1)^{(j-1)} \binom{M-1}{j}
\bigg(\frac{\binom{M-1-j}{Q-1-j}}{\binom{M-1}{Q-1}} \bigg)^L  \nonumber \\
\stackrel{(d)}\leq & (M-1) \left(\frac{Q-1}{M-1} \right)^L
\label{Sc-nonempty}
\end{align}

where $(a)$ follows from the principle of inclusion-exclusion, in
$(b)$, we utilize the property that
\begin{align}
\Pr \left(  \bigcap_{m \in J} D_m\right) = 0, \quad \forall |J|
\geq Q
\end{align}
since the intersection set $\mathcal{S}_c$ contains at most $Q-1$
elements, $(c)$ comes from the fact that there are
$\binom{M-1}{j}$ ways to select $j$ elements from the set $\{1,2,
\ldots,M-1 \}$ to make $|J|=j$, and $(d)$ follows from
\begin{align}
\Pr \left(\bigcup_{m=1}^{M-1} D_m \right) \leq  \sum_{m=1}^{M-1}
\Pr(D_m) = (M-1) \Pr(D_m)
\end{align}

As a result, we have
\begin{align}
&  \Pr(\mathcal{S}_c  =\emptyset)
= 1-\Pr(\mathcal{S}_c\neq \emptyset )\nonumber \\
\stackrel{(a)}\geq &  1- (M-1) \left(\frac{Q-1}{M-1} \right)^L \nonumber \\
= & P(Q,L,M) \label{P-QL}
\end{align}
where $(a)$ is derived from \eqref{Sc-nonempty}.



Similarly, we have
\begin{align}
\Pr(\mathcal{S}_a = \emptyset) \geq  1- (N_t-1)
\left(\frac{R-1}{N_t-1} \right)^L = P(R,L,N_t)
\end{align}
Therefore the probability of identifying the location of the
largest component in $\boldsymbol{\Lambda}$ is no smaller than
\begin{align}
{\rm P} = \Pr(Z) \geq P(Q,L,M) \times P(R,L,N_t)
\end{align}
This completes our proof.

\section{Proof of Theorem \ref{theorem2}}
\label{appC} Let $X$ denote the number of NM rounds out of the
total $L$ full-coverage rounds of scanning, and $\tilde Z$ denote
the event of exact recovery of the location of the largest
element. Specifically, define a random variable
\begin{equation}
    X_l \triangleq \begin{cases}
    1, & \text{if round $ l$ is a NM round } \\
    0, & \text{if round $ l$ is not a NM round }
           \end{cases}
\end{equation}
Thus $X$ can be expressed as
\begin{align}
X \triangleq \sum_{l=1}^{L} X_l. \label{X-binom}
\end{align}
Clearly, the random variables $\{X_l\}$ are mutually independent
and identically distributed with $\Pr (X_l = 1) =p,\Pr(X_l = 0) =
1-p$. The random variable $X$, therefore, follows a binomial
distribution, i.e., $X \sim {\text B}(L,p)$. Thus we have
\begin{align}
\Pr(\tilde Z) = &  \sum_{l = 0}^{L} \Pr(\tilde Z|X=l) \Pr(X=l) \nonumber \\
\stackrel{(a)}=&  \sum_{l = 0}^{L} \Pr(\tilde Z|X=l)  \binom{L}{l} p^{l}(1-p)^{L-l} \nonumber \\
\stackrel{(b)}\geq & \sum_{l = 0}^{L}  g(Q,l,M) \times g(R,l,N_t)
\times \left( \binom{L}{l} p^{l}(1-p)^{L-l}  \right)
\label{prob-NLOS}
\end{align}

where $(a)$ is obtained from the probability mass function of the
binomial distribution; and in $(b)$, we directly apply the
equality $(c)$ of \eqref{Sc-nonempty}.

\bibliography{newbib}
\bibliographystyle{IEEEtran}

\end{document}